\def\BibTeX{{\rm B\kern-.05em{\sc i\kern-.025em b}\kern-.08em
    T\kern-.1667em\lower.7ex\hbox{E}\kern-.125emX}}
\acrodef{C-V2X}{cellular vehicle-to-everything}
\acrodef{V2I}{vehicle-to-infrastructure}
\acrodef{V2V}{vehicle-to-vehicle}
\acrodef{V2P}{vehicle-to-pedestrian}
\acrodef{RB}{resource block}
\acrodef{NR}{new radio}
\acrodef{QoS}{quality-of-service}
\acrodef{SINR}{signal-to-interference-plus-noise ratio}
\acrodef{CSI}{channel state information}
\acrodef{BS}{base station}
\acrodef{TxV}{transmitting vehicle}
\acrodef{RxV}{receiving vehicle}
\acrodef{RSSI}{received signal strength indicator}
\acrodef{RSS}{received signal strength}
\acrodef{PDF}{probability distribution function}
\acrodef{RV}{random variable}
\acrodef{MSE}{mean square error}
\acrodef{HR}{hazard rate}
\acrodef{ITS}{intelligent transportation systems}
\acrodef{PSFCH}{physical sidelink feedback channel}
\acrodef{PUCCH}{physical uplink control channel}
\acrodef{RV}{random variable}
\acrodef{i.i.d.}{independent, identically distributed}
\acrodef{GMM}{Gaussian mixture model}
\acrodef{TVaR}{tail value at risk}
\acrodef{CRF}{conditional relative frequency}
\acrodef{ORF}{overall relative frequency}
\acrodef{ISAC}{integrated sensing and communication}
\acrodef{IOC}{inverse optimal control}
\acrodef{mmWave}{millimeter wave}
\acrodef{DoF}{degrees of freedom}
\acrodef{VRU}{vulnerable road user}
\acrodef{AV}{autonomous vehicle}
\acrodef{HD}{high-definition}
\acrodef{NCV}{non-connected vehicle}
\acrodef{CV}{connected vehicle}
\acrodef{AoA}{angle-of-arrival}
\acrodef{AoD}{angle-of-departure}
\acrodef{ULA}{uniform linear array}
\acrodef{EKF}{extended Kalman filtering}
\acrodef{SNR}{signal-to-noise ratio}
\acrodef{VU}{vehicular user}
\acrodef{RIS}{reconfigurable intelligent surface}
\acrodef{AWGN}{additive white Gaussian noise}
\acrodef{MLE}{maximum likelihood estimator}
\acrodef{RSU}{roadside unit}
\acrodef{RCS}{radar cross section}
\acrodef{AD}{autonomous driving}
\acrodef{CAV}{connected autonomous vehicle}
\acrodef{ECM}{electronic countermeasure}
\acrodef{CP}{cooperative perception}
\acrodef{MDP}{Markov decision process}
\acrodef{RL}{reinforcement learning}
\acrodef{STL}{signal temporal logic}
\acrodef{NN}{neural network}
\acrodef{DTCR}{deep temporal clustering representation}
\acrodef{PPO}{proximal policy optimization}
\acrodef{CAS}{communication aided sensing}
\acrodef{SAC}{sensing aided communication}
\acrodef{ADPAR}{angular-domain peak-to-average ratio}
\acrodef{CRB}{Cramér-Rao bound}
\newtheorem{theorem}{Theorem}
\newtheorem{lemma}{Lemma}
\newtheorem{definition}{Definition}
\newtheorem{remark}{Remark}
\begin{document}

\title{Analysis and Detection of RIS-based Spoofing in Integrated Sensing and Communication (ISAC)\vspace{-0.3cm}}

\author{\normalsize Tingyu Shui, \IEEEmembership{Graduate Student Member, IEEE}, Po-Heng Chou, \IEEEmembership{Member, IEEE}, Walid Saad, \IEEEmembership{Fellow, IEEE}, and Mingzhe Chen, \IEEEmembership{Senior Member, IEEE}\vspace{-1.15cm}
\thanks{T. Shui and W. Saad are with Bradley Department of Electrical and Computer Engineering, Virginia Tech, Alexandria, VA, 22305, USA, Emails: tygrady@vt.edu, walids@vt.edu.}
\thanks{Po-Heng Chou is with the Research Center for Information Technology Innovation, Academia Sinica, Taipei, 11529, Taiwan, Email: d00942015@ntu.edu.tw.}
\thanks{M. Chen is with the Department of Electrical and Computer Engineering and Frost Institute for Data Science and Computing, University of Miami, Coral Gables, FL, 33146, USA, Email: mingzhe.chen@miami.edu.}
}
\maketitle
\begin{abstract}
\Ac{ISAC} is a key feature of next-generation 6G wireless systems, allowing them to achieve high data rates and sensing accuracy. While prior research has primarily focused on addressing communication safety in \ac{ISAC} systems, the equally critical issue of sensing safety remains largely under-explored. In this paper, the possibility of spoofing the sensing function of \ac{ISAC} in vehicle networks is examined, whereby a malicious \ac{RIS} is deployed to compromise the sensing functionality of a \ac{RSU}. For this scenario, the requirements on the malicious \ac{RIS}' phase shifts design and number of reflecting elements are analyzed. Under such spoofing, the practical estimation bias of the \ac{VU}'s Doppler shift and \ac{AoD} for an arbitrary time slot is analytically derived. Moreover, from the attacker's view, a \ac{MDP} is formulated to optimize the \ac{RIS}'s phase shifts design. The goal of this \ac{MDP} is to generate complete and plausible fake trajectories by incorporating the concept of spatial-temporal consistency. To defend against this sensing spoofing attack, a \ac{STL}-based neuro-symbolic attack detection framework is proposed and shown to learn interoperable formulas for identifying spoofed trajectories. Simulation results show that, with a beam misalignment of only $3.1^{\circ}$, a spoofed velocity can be estimated with a significant deviation from $-9.9$~m/s to $50$~m/s, when the ground truth is $10$~m/s. Moreover, an \ac{AoD} estimation bias of up to $11^{\circ}$ can be induced by the attacker. Such spoofed sensing outcomes cause a $28\%$ reduction in the achievable rate of the \ac{VU} in the beam tracking application. Finally, the proposed attack detection framework achieves an accuracy of $74.17\%$ in identifying the \ac{RIS} attack, which is $53.34\%$ higher than that of a deep clustering-based benchmark.

\end{abstract}
\begin{IEEEkeywords}
\Ac{ISAC}, Sensing safety, Vehicular network, \Ac{RIS}.
\end{IEEEkeywords}
\acresetall
\vspace{-15pt}
\section{Introduction}
\vspace{-3pt}
To realize future \ac{ITS}, communications between transportation agents (vehicle and/or infrastructure) are fundamental for safety-critical applications such as cooperative perception and autonomous driving. Particularly, a vehicular network will need both sensing and communication services of high reliability and quality
. One promising approach is \ac{ISAC} \cite{10489861}, which equips vehicle and infrastructure with simultaneous communication and sensing capability. 

\Ac{ISAC} enables the integration of communication and sensing functions in \ac{ITS} through the co-design of the hardware architecture, transmitted waveforms, and signal processing algorithms on \ac{CAV} and \ac{RSU} \cite{10944644}. Such joint design benefits the vehicular network via two distinct aspects: \ac{CAS} and \ac{SAC} \cite{10845869}. First, for \ac{CAS}, the sensing accuracy and coverage can be enhanced by leveraging inter-agent transmission of environment perception information. One example is cooperative perception in which vehicles can extend their perception coverage by sharing the local sensing outcomes via wireless communication \cite{9606821}. Correspondingly, \ac{SAC} aims to exploit the sensing outcome to improve communication performance. Particularly, the sensing outcome pertaining to communication users, e.g., high-mobility \acp{VU}, can assist the transceiver to achieve accurate beam alignment and meet desired communication \ac{QoS} at high-frequency bands \cite{9171304}. However, such highly coupled design can be vulnerable to malicious attacks due to the broadcast nature of wireless transmission \cite{9755276}. Given the dire consequences of faulty sensing information in both \ac{CAS} and \ac{SAC} applications, it is essential to design \ac{ISAC} systems that can still operate effectively under adversarial conditions.

Aligned with the dual \ac{ISAC} functionalities, the security concerns in an \ac{ISAC} system should also be addressed from two distinct perspectives: \emph{communication safety} and \emph{sensing safety}. Generally, works on \emph{communication safety} \cite{9199556, 9737364,10781436}
reconsider the basic attacks in wireless communication, e.g., eavesdropping and jamming, in the context of \ac{ISAC}. Therein, the goal is to ensure confidential transmission while considering the tight coupling of sensing and communication functionalities. However, the \emph{sensing safety} issue in \ac{ISAC} systems remains largely under-explored. Some preliminary works are presented in \cite{10856886, 10443321, 10575930, 10634199}, in which the objective is, from a defender’s perspective, to prevent attackers from accessing the sensing information of authorized targets. Specifically, inspired by \ac{ECM} techniques, these works \cite{10856886, 10443321, 10575930, 10634199} leverage a \ac{RIS} to enhance the sensing echoes received at the authorized \ac{ISAC} station while suppressing those received at the unauthorized \ac{ISAC} station. Although the exploitation of \ac{RIS} improves the \emph{sensing safety} of \ac{ISAC} systems, this strategy also raises critical questions: \emph{Can an \ac{RIS} be adversarially deployed by an attacker to compromise the sensing safety of \ac{ISAC} systems instead? If so, how would this malicious \ac{RIS} be designed, and what impact would it have on the sensing outcomes? Can we design \ac{ISAC} systems resilient to such an attack on sensing functionality?}

\vspace{-10pt}
\subsection{Prior Works}
\vspace{-2pt}
The issue of sensing safety has recently attracted interest in a number of prior works \cite{10856886, 10443321, 10575930, 10634199, 10659120,10623793,article,chen2023metawave,11050921}. Specifically, in \cite{10856886, 10443321, 10575930} the authors used an  \ac{RIS}, as a legitimate component, to enhance the sensing safety of \ac{ISAC} systems by providing confidential sensing service. The phase shifts and amplitudes of the \ac{RIS}, mounted on the target, are deliberately designed for directional radar stealth. As a result, the targets can only be detected by the authorized \ac{ISAC} station, while remaining invisible to unauthorized \ac{ISAC} station. In \cite{10634199}, the \ac{RIS} was further exploited to simultaneously conceal the target and even spoof the unauthorized \ac{ISAC} station by redirecting the detection signal to clutter. As a result, a deceptive \ac{AoA} sensing outcome is generated at the unauthorized \ac{ISAC} station.

Meanwhile, the authors in \cite{10659120,10623793,article,chen2023metawave,11050921} examined the potential of sensing attacks, which aim to generate faulty sensing outcomes at the authorized \ac{ISAC} station by manipulating the sensing echoes. In \cite{10659120}, a sensing-resistant jammer was designed to disrupt the legitimate transmission while concealing itself from being detected. A novel metric named \ac{ADPAR} is proposed to quantify the detectability of the jammer. Then, the achievable rate of legitimate transmission is minimized under \ac{ADPAR} constraints. However, compared with active jamming, passive attacks pose greater challenges. In \cite{10623793}, the authors examined the impact of ``DISCO" \ac{RIS}, i.e., a malicious \ac{RIS} with random and time-varying reflection properties, on the \ac{ISAC} system. Moreover, the work in \cite{10623793} provided a theoretical analysis on the lower bound of the \ac{SINR} received by the communication users under the considered attack. In addition, the works in \cite{article} and \cite{chen2023metawave} futher experimentally implemented backscatters tags for sensing attacks. In \cite{11050921}, the authors extended the scenario to multiple jamming attackers and further proposed an attack detection method. In particular, in \cite{11050921}, a two-stage transmission protocol is designed, where beam scanning is used for attack detection in the first stage and a robust jamming beamforming is designed in the second stage to mitigate the impact of sensing attack. 

While some prior works \cite{10856886, 10443321, 10575930, 10634199, 10659120,10623793,article,chen2023metawave,11050921} have examined the sensing safety issue in \ac{ISAC} systems, a number of challenges remain unaddressed and require further investigation. Firstly, the prior art generally quantifies the impact of sensing attack by lower-bound metrics such as the \ac{CRB}, which reflect the theoretical limit of sensing accuracy. However, this overlooks the practical estimation bias introduced by spoofing attacks, which is critical for downstream tasks like beam tracking and trajectory planing. Secondly, none of the existing studies has examined spoofed sensing outcomes from a trajectory-level perspective. In particular, the existing literature only focuses on the impact of sensing spoof within a single time slot. Thus, when a full trajectory composed of multiple time slots is given, such a temporally isolated anomaly can be easily detected due to its inconsistency with adjacent time slots. In contrast, if a spoofing attack is capable of generating a complete plausible trajectory, it can greatly increase the severity of its impact and make detection significantly more challenging. Thirdly, the prior art, such as in \cite{10856886, 10443321, 10575930, 10634199, 10659120,10623793,article,chen2023metawave,11050921}, does not develop efficient detection mechanisms for sensing attacks on \ac{ISAC} systems. In our preliminary work \cite{shui2025sensing}, we studied the feasibility of the \ac{RIS} sensing attack for a vehicular network in which the goal of the \ac{RIS} is to manipulate the sensing echo received by the \ac{RSU}. The practical estimation bias of the \ac{VU}'s Doppler shift and \ac{AoD} are analytically derived. However, our work in \cite{shui2025sensing} only focuses on the spoofed sensing outcomes within a single time slot and does not examine the possibility of the malicious \ac{RIS} to generate plausible trajectories. Furthermore, the framework in \cite{shui2025sensing} does not provide any solution for attack detection in such spoofing use cases.

\vspace{-10pt}
\subsection{Contributions}
\vspace{-2pt}
The main contribution of this paper is, thus, the analysis of the impact of malicious \ac{RIS} on the sensing safety of \ac{ISAC} systems and the development of a corresponding attack detection framework. We particularly investigate the feasibility of \ac{RIS} spoofing on the \ac{RSU}'s estimation of a \ac{VU}'s Doppler shift and \ac{AoD}. The necessary conditions for the \ac{RIS} to conduct the spoofing are derived and the practical estimation bias of the sensing outcomes are analyzed, from both slot-level and trajectory-level perspectives. Finally, a \ac{STL}-based neuro-symbolic attack detection framework is proposed to identify spoofed trajectories. To our best knowledge, \textit{this is the first work that studies the feasibility of sensing attacks launched by malicious \ac{RIS}, analyzes its impact on sensing outcomes, and proposes a corresponding attack detection approach}. Hence, our key contributions include:
\begin{itemize}
    \item We consider a novel \ac{RIS}-aided spoofing scenario in which the sensing safety of an \ac{ISAC} vehicular network is compromised. We derive the necessary conditions on the \ac{RIS}, i.e. time-varying phase shifts and number of reflecting elements, of such a spoofing attack. We derive its impact on the \ac{RSU}'s estimation bias of the \ac{VU}'s Doppler shift and \ac{AoD}. Specifically, the feasible spoofing frequencies set with respect to the \ac{VU}'s Doppler shift for each single sensing slot is derived. Moreover, we analytically show that, the \ac{MLE} of the \ac{AoD} will also be jeopardized under the proposed \ac{RIS} spoofing. 
    \item From the attacker's perspective, we formulate a \ac{MDP} optimization problem that allows the attacker to design the \ac{RIS}' phase shifts in a way to generate complete and plausible fake trajectories that are difficult to distinguish from regular, real-world ones. Particularly, we define and incorporate the concept of spatial-temporal consistency into the \ac{MDP}, serving as a regularization term to ensure that the spoofed trajectory adheres to real-world physical constraints across consecutive time slots. This makes it challenging for the \ac{RSU} to identify the spoofed trajectories and detect the existence of a malicious \ac{RIS}. Due to the state-dependent action space of the proposed \ac{MDP}, an action-masking enhanced \ac{PPO} framework is deployed on the attacker.
    \item From a defender's view, we propose an \ac{STL}-based neuro-symbolic learning framework that detects potential sensing attacks by investigating the sensed \ac{VU}'s trajectory. In particular, \ac{STL} is a formal language used to specify the temporal properties of time signal sequence. By leveraging the TLINet approach of \cite{TLINet}, we explicit learn the formulas that a real-world trajectory should satisfy, which is then used to identify abnormal spoofed trajectory. Finally, an interpretable sensing attack detection mechanism can be realized according to the learned formulas.
    \item Simulation results demonstrate that the \ac{RIS} spoofing attack significantly compromises the sensing accuracy and the communication performance of the \ac{VU}. Specifically, with a beam misalignment of only $3.1^{\circ}$, the estimation error of the \ac{VU}'s velocity can range from $-9.9$~m/s to $50$~m/s, when the ground truth is $10$~m/s, and, an \ac{AoD} estimation bias of up to $11^{\circ}$ can be induced. Due to the spoofed sensing outcomes, the achievable rate of the \ac{VU} will be decreased by $28\%$ in the beam tracking application. Moreover, compared to the deep clustering-based detection benchmark, our proposed framework improves the detection accuracy by $53.34\%$, achieving a final accuracy of $74.17\%$ in identifying the \ac{RIS} attack.
\end{itemize}

The rest of the paper is organized as follows. The system model is presented in Section \ref{Section II}. The impact of the \ac{RIS} spoofing on the sensing outcome in a single time slot is analyzed in Section \ref{Section III}. Section \ref{Section IV} extends the analysis to trajectory level. Section \ref{Section V} introduces the \ac{STL}-based neuro-symbolic attack detection framework. Section \ref{Section VI} presents the simulation results and Section \ref{Section VII} concludes the paper.

\vspace{-9pt}
\section{System Model}
\vspace{-3pt}
\label{Section II}
Consider an \ac{ISAC} system supported by a full-duplex \ac{mmWave} \ac{RSU} \cite{9171304}. The \ac{RSU} is equipped with a \ac{ULA} of $N_t$ transmit antennas and $N_r$ receive antennas to provide sensing and downlink communications to a single-antenna \ac{VU}. By estimating the \ac{VU}'s kinetic states from echo signals, the \ac{RSU} can support critical applications such as beam tracking and safety alert. Following the standard convention in \cite{9171304}, we consider a scenario in which the \ac{VU} moves along a multi-lane straight road parallel to the \ac{RSU}'s \ac{ULA}.\footnote{The extension to non-parallel cases is straightforward and can be done by rotating the coordinate system.} A two-dimensional Cartesian coordinate system is used with the \ac{RSU} located at its origin, as shown in Fig. \ref{System_Model}. We focus on a time period discretized into $K$ short time slots, i.e., sensing time slots, and assume that each time slot corresponds to a short interval $T$. During time slot $k$, the goal of the \ac{RSU} is to estimate the \ac{VU}'s coordinates $(x_{k},y_{k})$ and velocity $v_k$,\footnote{The velocity $v_k$ captures only the x-directional component. The y-axis component can be either estimated by deploying another \ac{ULA} aligned with the y-axis or inferred from the uplink information transmitted by the \ac{VU} \cite{10561505}.} which are assumed to be constant in a given time slot \cite{9947033}. Thus, the \ac{RSU} can obtain the \ac{VU}'s trajectory $\boldsymbol{S} = [\boldsymbol{s}_1, \ldots, \boldsymbol{s}_K]$ with $\boldsymbol{s}_k = \left[ x_{k},y_{k}, v_k\right]^T$. In this considered system, a malicious \ac{RIS} is deployed to compromise the \ac{RSU}'s sensing functionality by manipulating the echo signals. Without loss of generality, we next introduce the system model with respect to time slot $k$.

\vspace{-8pt}
\subsection{Signal Model}
\vspace{-3pt}
Let $q_k(t) \in \mathbb{C}$ be the \ac{RSU}'s transmitted \ac{ISAC} symbol at time $t$ during slot $k$ with unit power, i.e., $q_k(t)q_k^{*}(t) = 1$. Given the precoding vector $\boldsymbol{w}_k \in \mathbb{C}^{N_t \times 1}$ at the \ac{RSU}, the echo signal reflected by the \ac{VU} will be given by:
\begin{equation}
\label{echo_V}
    \boldsymbol{y}_{\textrm{V},k}(t) = \sqrt{P} \gamma_{\textrm{B}} \beta_{\textrm{V},k} e^{j 2 \pi \mu_{k} t} \boldsymbol{b}_B(\theta_{k}) \boldsymbol{a}^{H}_B(\theta_{k}) \boldsymbol{w}_k q_k(t - \tau_{k}),
\end{equation}
where $P$ is the transmit power of the \ac{RSU}, $\gamma_{\textrm{B}} = \sqrt{N_t N_r}$ is the array gain factor, and $\beta_{\textrm{V},k} = \sqrt{\frac{\lambda^2 \kappa_{\textrm{V}}}{64 \pi^3 d_{k}^4}} e^{\frac{j 4 \pi d_{k}}{\lambda}}$ combines the complex path gain of the \ac{RSU}-\ac{VU}-\ac{RSU} link and the \ac{RCS} $\kappa_{\textrm{V}}$ of the \ac{VU} \cite{9724202}. The distance between the \ac{RSU} and the \ac{VU} is $d_{k}$ and the carrier wavelength is $\lambda$. Meanwhile, the Doppler shift $\mu_{k}$, \ac{AoD} $\theta_{k}$, and double-path echo delay $\tau_{k}$ of the \ac{VU} are included in \eqref{echo_V}, where $\boldsymbol{a}_B(\theta)$ and $\boldsymbol{b}_B(\theta)$ are the steering vectors of the transmitting and receiving antennas, respectively, at the \ac{RSU}. By assuming a half-wavelength antenna space, $\boldsymbol{a}_B(\theta)$ and $\boldsymbol{b}_B(\theta)$ will be given by \cite{9171304}:
\vspace{-6pt}
\begin{equation}
\label{steering_t} 
    \boldsymbol{a}_B(\theta)= \frac{1}{\sqrt{N_t}} \left[1, e^{-j \pi \cos \theta}, \ldots, e^{-j \pi\left(N_t-1\right) \cos \theta}\right]^T,
\end{equation}    
\vspace{-6pt}
\begin{equation}
\label{steering_r} 
    \boldsymbol{b}_B(\theta)= \frac{1}{\sqrt{N_r}} \left[1, e^{-j \pi \cos \theta}, \ldots, e^{-j \pi\left(N_r-1\right) \cos \theta}\right]^T.
\end{equation}   

\begin{figure}[t]
	\centering
	\includegraphics[scale=0.37]{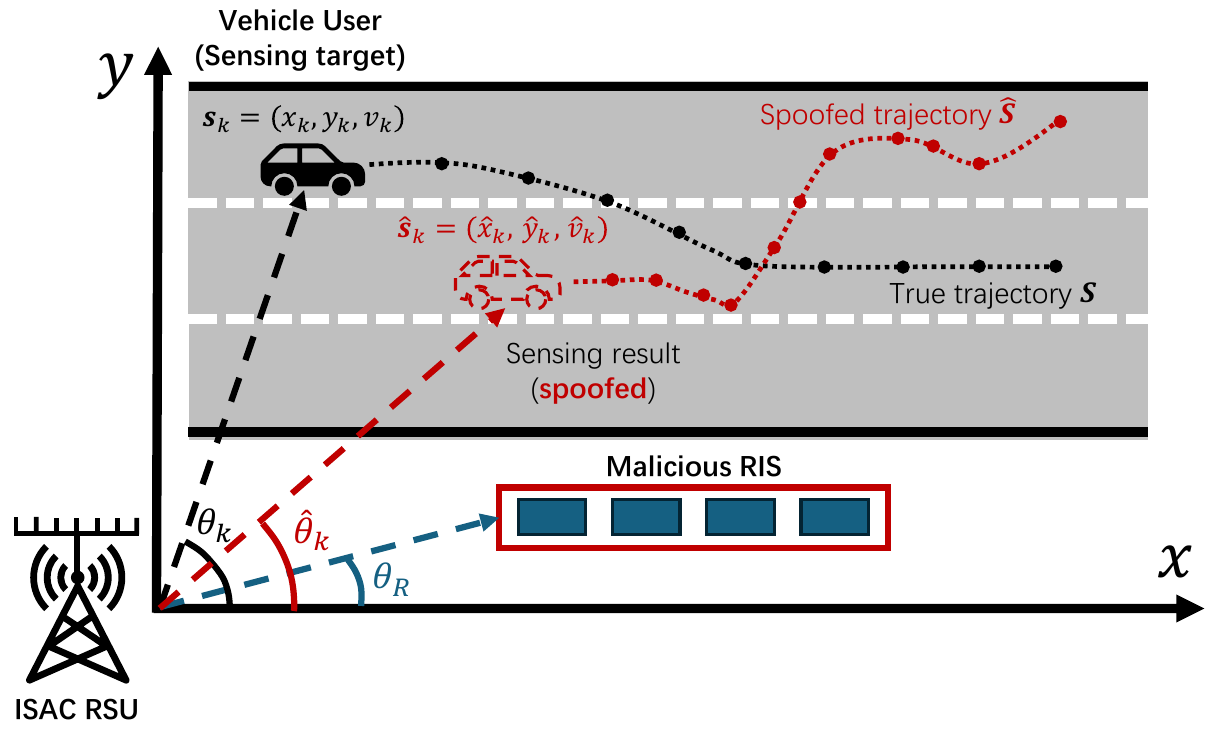}
      \vspace{-6pt}
	\caption{\small{Illustration of the considered \ac{ISAC} system model under the spoofing of a malicious \ac{RIS}.}}
 \label{System_Model}
   \vspace{-15pt}
\end{figure}
In presence of the malicious \ac{RIS}, the echo received by the \ac{RSU} is a composition of the legitimate echo reflected by the \ac{VU} and the spoofing echo adversarially reflected by the \ac{RIS}. We assume that the \ac{RIS} is equipped with a \ac{ULA} of $M$ elements. Thus, the spoofing echo from the \ac{RIS} will be:
\vspace{-4pt}
\begin{equation}
\label{echo_R}
\begin{aligned}
    \boldsymbol{y}_{\textrm{R},k}(t) 
    = & \sqrt{P} \gamma_{\textrm{B}} \beta_{\textrm{R}} \boldsymbol{b}_{\textrm{B}}(\theta_{\textrm{R}}) \boldsymbol{a}^{H}_{\textrm{R}}(\theta_{\textrm{B}}) \text{diag}[\boldsymbol{\phi}_k(t)] \times \\
    & \ \boldsymbol{b}_{\textrm{R}}(\theta_{\textrm{B}}) \boldsymbol{a}^{H}_{\textrm{B}}(\theta_{\textrm{R}}) \boldsymbol{w}_k q_k(t - \tau_{\textrm{R}}), \\
\end{aligned}
\end{equation}
where $\beta_{\textrm{R}} = \sqrt{\frac{\lambda^2 \kappa_{\textrm{R}}}{64 \pi^3 d_{\textrm{R}}^4}} e^{\frac{j 4 \pi d_{\textrm{R}}}{\lambda}}$ combines the complex path gain of the \ac{RSU}-\ac{RIS}-\ac{RSU} link and the \ac{RCS} $\kappa_{\textrm{R}}$ of the \ac{RIS}. According to \cite{10443321}, we have $\kappa_{\textrm{R}} = \frac{4 \pi \eta S^2}{\lambda^2}$ with the \ac{RIS}'s refection efficiency $\eta$, area $S$, and operating wavelength $\lambda$. Similarly, $\boldsymbol{b}_R(\cdot)$ and $\boldsymbol{a}_R(\cdot)$, $\theta_{\textrm{B}}$, and $\tau_{\textrm{R}}$ are the steering vectors, \ac{AoA} (and also \ac{AoD}), and echo delay of the \ac{RIS}. The phase shifts at the \ac{RIS} are given by $\boldsymbol{\phi}_k(t) = \left[ e^{j \phi_{k,1}(t)}, \ldots, e^{j \phi_{k,M}(t)}\right]$ and unit reflection amplitudes are assumed for simplicity. Note that $\boldsymbol{\phi}_k(t) $ must be time-varying to spoof the sensing process, as we will explain in Section \ref{Section III.A}. Thus, the composite echo received by the \ac{RSU} will be given by:
\vspace{-4pt}
\begin{equation}
\label{joint echo}
    \boldsymbol{y}_{k}(t) = \boldsymbol{y}_{\textrm{R},k}(t) + \boldsymbol{y}_{\textrm{V},k}(t) + \boldsymbol{z}_{\textrm{E}}(t),
\end{equation}
where $\boldsymbol{z}_{\textrm{E}}(t) \sim \mathcal{CN}(0, \sigma^2 \boldsymbol{I}_M)$ is the \ac{AWGN} at the \ac{RSU}'s receiving antennas.
\vspace{-8pt}
\subsection{Sensing Model}
\vspace{-2pt}
The goal of the sensing process on the \ac{RSU} during time slot $k$ is to estimate $\left[ \tau_{k}, \mu_{k}, \theta_{k} \right]$ for deriving $\boldsymbol{s}_k$. Particularly, the \ac{RSU} can determine $\mu_{k}$ and $\theta_{k}$ through a standard matched-filtering technique \cite{9171304}, after which the received signal in \eqref{joint echo} is compensated in both time and frequency domain for further estimating $\theta_{k}$. In particular, the matched-filtering output of $\boldsymbol{y}_{k}(t)$ is defined as follows:
\vspace{-6pt}
\begin{equation}
\vspace{-4pt}
\label{match filter}
C_k(\tau, \mu) \triangleq \left| \int_{0}^{T} \boldsymbol{y}_{k}(t) q^{*}(t - \tau) e^{-j 2 \pi \mu t} dt \right|^2.
\end{equation}
Thus, the estimated echo delay and Doppler shift can be given by $\left( \hat{\tau}_{k}, \ \hat{\mu}_{k} \right) = \arg \max_{\tau, \ \mu} C_k(\tau, \mu)$. Moreover, we consider the \ac{MLE} of $\theta_{k}$, defined as:
\vspace{-6pt}
\begin{equation}
\vspace{-4pt}
\label{MLE spoofed}
    \hat{\theta}_{k}=\arg \max _{\theta_{k}} p(\hat{\boldsymbol{y}}_{k} \mid \theta_{k}).
\end{equation}
In \eqref{MLE spoofed}, $p(\hat{\boldsymbol{y}}_{k} \mid \theta_{k})$ represents the likelihood function and $\hat{\boldsymbol{y}}_{k} = \int_{0}^{T} \boldsymbol{y}_{k}(t) q^{*}(t - \hat{\tau}_{k}) e^{-j 2 \pi \hat{\mu}_{k} t} dt$ is the echo compensated by $\hat{\tau}_{k}$ and $\hat{\mu}_{k}$. Given $\left[ \hat{\tau}_{k}, \hat{\mu}_{k},\hat{\theta}_{k}\right]$, we can estimate $\boldsymbol{s}_k$:
\vspace{-6pt}
\begin{equation}
\label{state}
    \hat{\boldsymbol{s}}_k = \left[ c \hat{\tau}_k \cos{\hat{\theta}_{k}}, c \hat{\tau}_k \sin{\hat{\theta}_{k}}, \frac{\hat{\mu}_{k} c}{f_c \cos{\hat{\theta}_{k}}}\right].
\end{equation}

The goal of the malicious \ac{RIS} is to hinder the \ac{RSU} from obtaining accurate $\left[ \hat{\tau}_{k}, \hat{\mu}_{k},\hat{\theta}_{k}\right]$, thereby degrading the accuracy of $\hat{\boldsymbol{s}}_k$. In other words, the spoofing echo $\boldsymbol{y}_{\textrm{R},k}(t)$ given in \eqref{echo_R} is expected to cause a deviation in the peak position of $C_k(\tau, \mu)$ and further distort the accuracy of the \ac{MLE} $\hat{\theta}_{k}$. In a typical \ac{ISAC} system with large bandwidth, the delay resolution of $C_k(\tau, \mu)$ is sufficiently high to resolve the two echoes given in \eqref{echo_V} and \eqref{echo_R}. Thus, the spoofed echo can be mitigated if the echo delay difference satisfies $|\tau_{k} - \tau_{\textrm{R}}| \geq \tau_0$, where $\tau_0$ is the effective main-lobe width of $C_k(\tau, \mu)$ in the time domain. To successfully conduct the echo spoofing, an adjustable-delay \ac{RIS} \cite{10423078} is assumed in our system, as shown in Fig. \ref{time-delay_RIS}. Specifically, a time-delay unit is cascaded before the \ac{RIS}'s phase shifter such that it is able to store and retrieve the impinging signals with a designed delay $\Delta t_{\textrm{R}}$. As a result, the echo delay in \eqref{echo_R} will be: $\tau_{\textrm{R}} = \Delta t_{\textrm{R}} + 2 \frac{d_{\textrm{R}}}{c}$. 
\vspace{-2pt}
\begin{remark}
    The reason for incorporating the adjustable-delay \ac{RIS} is to ensure that $|\tau_{k} - \tau_{\textrm{R}}| < \tau_0$. In other words, $\hat{\tau}_{k}$ is not spoofed. Hereinafter, we consider $|\tau_{k} - \tau_{\textrm{R}}| \approx 0$ since $\tau_0$, inversely proportional to the bandwidth \cite{yuan2024otfs}, is typically small in \ac{ISAC} systems (e.g., $2$~ns with bandwidth of $500$~MHz \cite{9947033}). Thus, a necessary condition for the \ac{RIS} echo spoofing is $ d_{\textrm{R}} \leq d_k \leq d_{\textrm{R}} + \frac{ \Delta_{\textrm{max}} c}{2} $ with $\Delta_{\textrm{max}}$ being the maximum adjustable-delay. For instance, given a typical value $\Delta_{\textrm{max}} = 0.32$~us \cite{10423078}, the spoofing range becomes $ d_{\textrm{R}} \leq d_k \leq d_{\textrm{R}} + 48~\textrm{m}$.
    \vspace{-2pt}
\end{remark}
Equipped with the time-delay unit, the malicious \ac{RIS} can manipulate the sensing echo received at the \ac{RSU}. Next, we derive the key conditions on the \ac{RIS}, i.e., its phase shifts design, for conducting the considered spoofing attack. Moreover, the spoofing attack's impact on the \ac{VU}'s sensing outcomes within time slot $k$ is analyzed.
\begin{figure}[t]
	\centering
	\includegraphics[scale=0.42]{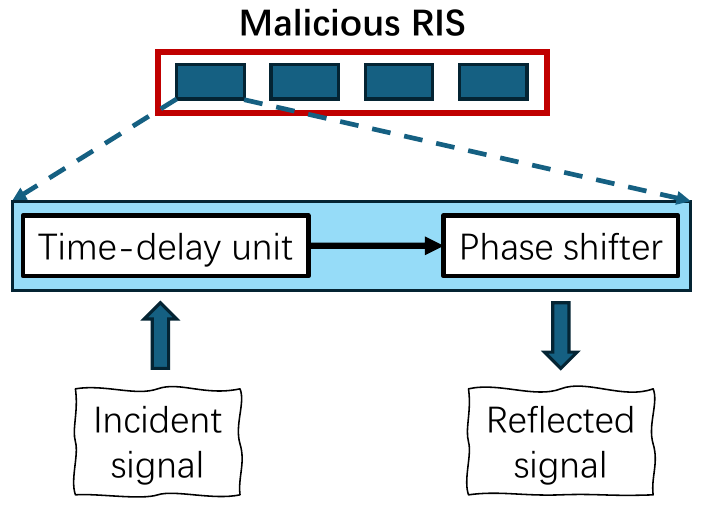}
     \vspace{-6pt}
	\caption{\small{The architecture of the adjustable-delay \ac{RIS}.}}
 \label{time-delay_RIS}
  \vspace{-15pt}
\end{figure}

\vspace{-10pt}
\section{Slot-Level Impact of Sensing Spoofing}
\vspace{-2pt}
\label{Section III}
In this section, we examine the conditions on the malicious \ac{RIS}'s phase shift design for a successful spoofing attack. From the slot-level perspective, we derive the feasible spoofing frequency set on the \ac{VU}'s Doppler shift estimation. Moreover, we further analyze the practical \ac{AoD} estimation bias.
\vspace{-8pt}
\subsection{Conditions on \ac{RIS}'s Phase shifts}
\vspace{-2pt}
\label{Section III.A}
First, we can rewrite the matched-filtering output in \eqref{match filter} as:
\vspace{-8pt}
\begin{equation}
\begin{aligned}
\label{match filtering c}
C_k(\tau, \mu) 
= & \Biggl| \sqrt{P} \gamma_{\textrm{B}}  \beta_{\textrm{V},k}  \boldsymbol{b}_B(\theta_{k}) \boldsymbol{a}^{H}_B(\theta_{k}) \boldsymbol{w}_k c_{\textrm{V},k}(\tau, \mu ) + \\
\vspace{-8pt}
& \ \sqrt{P} \gamma_{\textrm{B}}  \beta_{\textrm{R}} \boldsymbol{b}_{\textrm{B}}(\theta_{\textrm{R}}) \boldsymbol{a}^{H}_{\textrm{B}}(\theta_{\textrm{R}}) \boldsymbol{w}_k c_{\textrm{R},k}(\tau, \mu ) + \tilde{\boldsymbol{z}}_{\textrm{E}}(\tau, \mu)   \Biggl |^2, 
\end{aligned}
\end{equation}
where $c_{\textrm{V},k}(\tau, \mu ) \triangleq \int_{0}^{T} q(t - \tau_k)q^{*}(t - \tau) e^{ - j 2 \pi ( \mu -  \mu_{k}) t} dt $, $c_{\textrm{R},k}(\tau, \mu ) \triangleq \int_{0}^{T} q(t - \tau_k)q^{*}(t - \tau) \left[ \sum_{m=1}^{M} e^{ - j (2 \pi \mu t - \phi_{k,m}(t) )} \right] dt $, and $\tilde{\boldsymbol{z}}_{\textrm{E}}(\tau, \mu) \triangleq \int_{0}^{T} \boldsymbol{z}_{\textrm{E}}(t)q^{*}(t - \tau)e^{-j 2 \pi \mu t} dt $. To spoof the sensing outcome, we assume that the \ac{RIS} can derive $\Delta t_{\textrm{R}} = \tau_{\textrm{R}} -  2 \frac{ d_{\textrm{R}} }{c}$ by eavesdropping on the uplink communication of the \ac{VU}, which is a reasonable assumption \cite{9724202}. Thus, $C_k(\tau, \mu)$ will only exhibit one peak around $\tau_{k}$ in the time domain. Moreover, we assume a perfect echo delay estimation $\hat{\tau}_{k} = \tau_{k}$ due to the high delay resolution in large bandwidth \ac{mmWave} \ac{ISAC} systems. In other words, we only focus on the spoofed estimation of $\mu_k$ and $\theta_k$.

Given \eqref{match filtering c}, we can observe that the impact of \ac{RIS} spoofing stems from the term $\sqrt{P} \gamma_{\textrm{B}} \beta_{\textrm{R}} \boldsymbol{b}_{\textrm{B}}(\theta_{\textrm{R}}) \boldsymbol{a}^{H}_{\textrm{B}}(\theta_{\textrm{R}}) \boldsymbol{w}_k c_{\textrm{R},k}(\tau, \mu )$. Assuming $\hat{\tau}_{k} = \tau_{k}$, we have $c_{\textrm{V},k}(\hat{\tau}_{k}, \mu ) = \int_{0}^{T} e^{ - j 2 \pi ( \mu -  \mu_{k}) t} dt $ and $c_{\textrm{R},k}(\hat{\tau}_{k}, \mu ) = \sum_{m=1}^{M}  \int_{0}^{T} e^{ - j (2 \pi \mu t - \phi_{k,m}(t)) }dt $. We can find that the peak position of $c_{\textrm{V},k}(\hat{\tau}_{k}, \mu )$ in the frequency domain occurs at $\mu = \mu_{k}$. Thus, it is natural to set $\phi_{k,m}(t) = 2 \pi \tilde{\mu}_{k,m} t$ for artificially creating $M$ peak positions at frequencies $\tilde{\mu}_{k,1}, \ldots, \tilde{\mu}_{k,M}$. In practice, due to the hardware constraints of the \ac{RIS}, the phase shift $\phi_{k,m}(t)$ on the $m$-th element can be only given in a discrete form as:
\vspace{-6pt}
\begin{equation}
\label{phase shift}
 \phi_{k,m}(t) = (2 \pi \tilde{\mu}_{k,m} \lceil \frac{t}{\Delta T} \rceil \Delta T) \mod{2 \pi},
\end{equation}
where $\Delta T$ represents the shortest time interval over which $\phi_{k,m}(t)$ can vary. Meanwhile, the modulo operation is applied because the phase shifts of the \ac{RIS} are typically confined to the range $0$ to $2\pi$. Given the design in \eqref{phase shift}, we focus on one fundamental spoofing case in which the $M$ spoofing frequencies are set equal, i.e, $\tilde{\mu}_{k,1} = \ldots = \tilde{\mu}_{k,M} = \tilde{\mu}_k$ such that $\hat{\mu}_{k} \triangleq \arg \max_{\mu} C_k(\hat{\tau}_{k}, \mu)$ is spoofed as $\tilde{\mu}_k$. Note that, by setting all $M$ spoofing frequencies equal, the \ac{RIS} coordinates all its elements coherently to construct a peak of maximum magnitude at $C_k(\hat{\tau}_{k}, \tilde{\mu}_k)$, which constitutes a simple yet effective strategy to spoof the Doppler shift estimation. To this end, the \ac{RIS} should select $\tilde{\mu}_k$ such that the peak of $C_k(\hat{\tau}_{k}, \mu)$ occurs at $\tilde{\mu}_k$ rather than $\mu_{k}$, i.e., $C_k(\hat{\tau}_{k}, \tilde{\mu}_k) \geq  C_k(\hat{\tau}_{k}, \mu_{k})$. 

\vspace{-8pt}
\subsection{Impact on Sensing Outcomes}
\subsubsection{Doppler Shift Estimation}
We begin by deriving the feasible spoofing frequency set $\mathcal{A}_k \triangleq \{ \tilde{\mu}_k\mid C_k(\hat{\tau}_{k}, \tilde{\mu}_k) \geq  C_k(\hat{\tau}_{k}, \mu_{k}) \}$. Since the matched-filtering output of noise is negligible compared to the spoofing echo, we can ignore $\tilde{\boldsymbol{z}}_{\textrm{E}}(\hat{\tau}_{k}, \mu)$ and rewrite $C_k(\hat{\tau}_{k}, \mu)$ as:
\vspace{-6pt}
\begin{equation}
\label{match filter square}
\begin{aligned}
C_k(\hat{\tau}_{k}, \mu)
= & \frac{PT^2\gamma_{\textrm{B}}^2}{N_r} \sum_{n_r=1}^{N_r}  \Biggl|  \beta_{\textrm{V},k} g_{n_r}(\theta_{0,k}, \theta_{k}) e^{-j\pi(\mu-\mu_{k})T} \times \\
& \ \operatorname{sinc}(T (\mu - \mu_{k})) + \beta_{\textrm{R}} g_{n_r}(\theta_{0,k}, \theta_{\textrm{R}}) M e^{j \pi \tilde{\mu}_k\Delta T} \times \\
& \ e^{- j \pi (\mu - \tilde{\mu}_k)T} \operatorname{sinc}(\mu \Delta T) f(K,\pi \Delta T (\mu - \tilde{\mu}_k))\Biggr|^2,
\end{aligned}
\end{equation}
where we adopt a precoding vector $\boldsymbol{w}_k = \boldsymbol{a}_B(\theta_{0,k})$ steered towards $\theta_{0,k}$, which may be selected from a predefined codebook \cite{10740590}. Moreover, in \eqref{match filter square}, we define $f(K,x) = \frac{\sin{(Kx)}}{K \sin{x}}$, $\operatorname{sinc}(x) = \frac{\sin(\pi x)}{\pi x}$, and $g_{n_r}(\theta_1, \theta_2) = e^{-\frac{j \pi}{2} [ \cos{\theta_1(N_t -1)} - \cos{\theta_2} (N_t + 1 - 2n_r) ] } f(N_t, \frac{\pi}{2} (\cos{\theta_1} - \cos{\theta_{2}}) )$. To derive the expression of $\mathcal{A}_k$, we first present the following lemma pertaining to a set of infeasible spoofing frequencies.
\begin{lemma}
\label{lemma 1}
    If the number of \ac{RIS} reflecting elements satisfies $M \gg \sqrt{ \frac{\kappa_{\textrm{V}}}{4 \pi \eta}} \frac{\lambda}{S}  \left| \frac{f(N_t, \frac{\pi}{2} (\cos{\theta_{0,k}} - \cos{\theta_{k}}) )}{f(N_t, \frac{\pi}{2} (\cos{\theta_{0,k}} - \cos{\theta_{\textrm{R}}}) )} \right|$, an infeasible spoofing frequency set based on \eqref{match filter square} can be given by $ \mathcal{A}_{k,\varnothing} = \left\{ \mu \mid \mu = \mu_{k} + \frac{n}{\Delta T}, n \in \mathcal{Z} \right\}$, i.e., $\mathcal{A}_{k,\varnothing} \cap \mathcal{A}_k = \varnothing$.
\end{lemma}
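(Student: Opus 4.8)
The plan is to prove the contrapositive at the level of the matched-filter peak: I will show that for every frequency $\tilde{\mu}_k = \mu_k + n/\Delta T$ with $n \in \mathcal{Z}$, $n \neq 0$ (the case $n=0$ being the genuine Doppler, not a spoofing target), the defining inequality of $\mathcal{A}_k$ fails, i.e.\ $C_k(\hat{\tau}_k, \tilde{\mu}_k) < C_k(\hat{\tau}_k, \mu_k)$, so these frequencies cannot host the spoofed peak. The structural reason, which I would foreground, is that the discretized phase law \eqref{phase shift} turns the \ac{RIS} response in \eqref{match filter square} into a \emph{periodic} Dirichlet kernel $f(K, \pi\Delta T(\mu-\tilde{\mu}_k))$ riding on the slowly varying envelope $\operatorname{sinc}(\mu\Delta T)$. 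Planting a main lobe at $\tilde{\mu}_k$ therefore necessarily plants grating lobes of identical height at every alias $\tilde{\mu}_k + m/\Delta T$, and the choice $\tilde{\mu}_k - \mu_k = n/\Delta T$ places one such alias exactly on the true bin $\mu_k$. Because the envelope favours the bin closer to the origin, that alias outshines the intended peak.

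Concretely, I would substitute $\mu = \tilde{\mu}_k$ and $\mu = \mu_k$ into \eqref{match filter square} and compare. At $\mu = \tilde{\mu}_k$ the \ac{VU} sinc factor is $\operatorname{sinc}(T(\tilde{\mu}_k-\mu_k)) = \operatorname{sinc}(nT/\Delta T)$, which vanishes because $T$ is an integer multiple of $\Delta T$; the legitimate echo contributes nothing and the \ac{RIS} factor collapses to $\operatorname{sinc}(\tilde{\mu}_k\Delta T)\,f(K,0)=\operatorname{sinc}(\tilde{\mu}_k\Delta T)$. At $\mu = \mu_k$ the \ac{VU} sinc peaks at $\operatorname{sinc}(0)=1$, while the \ac{RIS} kernel is evaluated at $f(K,-\pi n)$; invoking the $0/0$ limit of $f(K,x)=\sin(Kx)/(K\sin x)$ at integer multiples of $\pi$ (with $K$ the integer number of phase sub-intervals) I would show $|f(K,-\pi n)|=1$, confirming that $\mu_k$ lies on a full-height grating lobe weighted by $\operatorname{sinc}(\mu_k\Delta T)$. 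Since $|g_{n_r}(\cdot,\cdot)|$ is independent of $n_r$, both sums over $n_r$ reduce to $N_r$ copies of a common magnitude.

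The hypothesis on $M$ enters to make the $M$-scaled \ac{RIS} term dominate the \ac{VU} term: substituting $\beta_{\textrm{V},k}$, $\beta_{\textrm{R}}$ and $\kappa_{\textrm{R}}=4\pi\eta S^2/\lambda^2$, using $d_k\approx d_{\textrm{R}}$ from Remark~1, and $|g_{n_r}(\theta_{0,k},\theta)|=|f(N_t,\tfrac{\pi}{2}(\cos\theta_{0,k}-\cos\theta))|$, the stated threshold is exactly $M|\beta_{\textrm{R}}||g_{n_r}(\theta_{0,k},\theta_{\textrm{R}})|\gg|\beta_{\textrm{V},k}||g_{n_r}(\theta_{0,k},\theta_k)|$. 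I would then bound $C_k(\hat{\tau}_k,\mu_k)$ from below by the dominant grating-lobe magnitude via the reverse triangle inequality (the \ac{VU} term being negligible), whereas $C_k(\hat{\tau}_k,\tilde{\mu}_k)$ equals the \ac{RIS} main-lobe magnitude exactly. The comparison then reduces to $|\operatorname{sinc}(\mu_k\Delta T)|$ versus $|\operatorname{sinc}(\mu_k\Delta T+n)|$; since $\Delta T$ is small enough that $|\mu_k\Delta T|<\tfrac12$, the identity $|\sin(\pi(\mu_k\Delta T+n))|=|\sin(\pi\mu_k\Delta T)|$ yields a magnitude ratio $|\mu_k\Delta T|/|\mu_k\Delta T+n|<1$, so the alias at $\mu_k$ strictly exceeds the planted lobe at $\tilde{\mu}_k$. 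This gives $C_k(\hat{\tau}_k,\mu_k)>C_k(\hat{\tau}_k,\tilde{\mu}_k)$, hence $\tilde{\mu}_k\notin\mathcal{A}_k$ and $\mathcal{A}_{k,\varnothing}\cap\mathcal{A}_k=\varnothing$.

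The step I expect to be the main obstacle is the rigorous evaluation of the Dirichlet grating lobe: one must justify $|f(K,-\pi n)|=1$ through the removable $0/0$ singularity and verify that the alias of the planted peak lands \emph{exactly} on $\mu_k$, which is where the integer relationship among $T$, $\Delta T$ and $n/\Delta T$ is indispensable. A secondary subtlety is keeping the dominance argument uniform in $n_r$: because the \ac{VU} and \ac{RIS} phasors carry $n_r$-dependent phases, I cannot assume constructive addition and must instead use the reverse triangle inequality so the bound survives worst-case destructive interference at $\mu_k$ --- and the strict ``$\gg$'' in the $M$-condition is exactly what absorbs the fixed envelope gap $|\operatorname{sinc}(\mu_k\Delta T)|-|\operatorname{sinc}(\mu_k\Delta T+n)|$ needed to close the inequality.
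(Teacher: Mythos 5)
Your proposal is correct and follows essentially the same route as the paper's proof: both establish $C_k(\hat{\tau}_k,\mu_k) > C_k(\hat{\tau}_k,\tilde{\mu}_k)$ for $\tilde{\mu}_k=\mu_k+n/\Delta T$ by using the condition on $M$ to let the $M$-scaled RIS term dominate the VU term, evaluating the Dirichlet kernel at the alias via the removable singularity $|f(K,-\pi n)|=1$, and closing with the envelope inequality $|\operatorname{sinc}(\mu_k\Delta T)|>|\operatorname{sinc}(\mu_k\Delta T+n)|$. Your refinements --- noting that the VU term vanishes exactly at $\mu=\tilde{\mu}_k$ because $T/\Delta T$ is an integer, using the reverse triangle inequality to survive worst-case phase alignment, making explicit the implicit requirement $|\mu_k\Delta T|<\tfrac{1}{2}$, and carving out $n=0$ --- tighten steps the paper treats as approximations, but they do not change the argument.
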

\begin{proof}
See the proof in the conference version \cite{shui2025sensing}.
\end{proof}

Given Lemma 1, if we only consider $\tilde{\mu}_k\in \mathcal{A}_k$, i.e., $\tilde{\mu}_k\notin \mathcal{A}_{k,\varnothing}$, $C_k(\hat{\tau}_{\textrm{V}}, \mu)$ in \eqref{match filter square} can be approximated by:
\vspace{-6pt}
\begin{equation}
\label{match filter approx}
\begin{aligned}
C_k(\hat{\tau}_{k}, \mu) & \approx PT^2\gamma_{\textrm{B}}^2 \Biggl[ C_{\textrm{V},k} \operatorname{sinc}^2(T (\mu - \mu_{k})) + M^2 C_{\textrm{R}} \times \\
\vspace{-8pt}
& \quad  \operatorname{sinc}^2(\mu \Delta T)  f^2(K,\pi \Delta T (\mu - \tilde{\mu}_k)) \Biggr],
\end{aligned}
\vspace{-4pt}
\end{equation}
where the cross-product terms in the quadratic expansion are ignored since $\ f(K,\pi \Delta T (\mu - \tilde{\mu}_k))\operatorname{sinc}(T (\mu - \mu_{k})) \approx 0$, $\forall \mu$ when $\tilde{\mu}_k\notin \mathcal{A}_{k,\varnothing}$. In \eqref{match filter approx}, we further define $C_{\textrm{V},k} \triangleq \beta_{\textrm{V},k}^2 f^2(N_t, \frac{\pi}{2} (\cos{\theta_{0,k}} - \cos{\theta_{k}}) )$ and $C_\textrm{R} \triangleq \beta_{\textrm{R}}^2 f^2(N_t, \frac{\pi}{2} (\cos{\theta_{0,k}} - \cos{\theta_{\textrm{R}}}) )$. Another observation in \eqref{match filter approx} is that the impact of the spoofing frequency $\tilde{\mu}_k$ is periodic, as shown in the periodic spoofing term $M^2 C_\textrm{R} \operatorname{sinc}^2(\mu \Delta T)  f^2(K,\pi \Delta T (\mu - \tilde{\mu}_k))$. Specifically, the spoofing frequency $\tilde{\mu}_k$ will lead to multiple peaks at $\mu = \tilde{\mu}_k + \frac{n}{\Delta T}, n \in \mathcal{Z}$. Therefore, in order to derive $\mathcal{A}_k$, we only consider the highest peak defined in Lemma \ref{lemma 2}.
\begin{lemma}
\label{lemma 2}
    If the condition on $M$ in Lemma \ref{lemma 1} is satisfied, the highest peak of the spoofing term in \eqref{match filter approx} will be given by $ \Delta \tilde{\mu}_k\triangleq \tilde{\mu}_k\mod{ \frac{1}{\Delta T}}$.
\end{lemma}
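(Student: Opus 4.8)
The plan is to isolate the spoofing term in \eqref{match filter approx}, namely $M^2 C_{\textrm{R}} \operatorname{sinc}^2(\mu \Delta T)\, f^2(K, \pi \Delta T(\mu - \tilde{\mu}_k))$, and to argue that its global maximum over $\mu$ is attained at the grating lobe of the Dirichlet-type kernel $f^2$ that lies closest to the zero-Doppler axis. First I would establish the peak structure of $f^2(K, \pi \Delta T(\mu - \tilde{\mu}_k))$: since $f(K,x) = \sin(Kx)/(K \sin x)$ attains its maximal magnitude of unity exactly when $\sin x = 0$, i.e. at $x = n\pi$, the factor $f^2$ peaks when $\pi \Delta T(\mu - \tilde{\mu}_k) = n\pi$, that is, at the grating lobes $\mu = \tilde{\mu}_k + n/\Delta T$, $n \in \mathcal{Z}$, where $f^2 = 1$. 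For large $K$ the interlobe maxima of $f^2$ are $O(1/K^2)$ and can be discarded, so the candidate maximizers of the spoofing term reduce to this discrete grating-lobe set.

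Next I would evaluate the slowly varying envelope $\operatorname{sinc}^2(\mu \Delta T)$ on these grating lobes. Substituting $\mu = \tilde{\mu}_k + n/\Delta T$ and using the shift identity $\operatorname{sinc}(x + n) = (-1)^n \sin(\pi x)/[\pi(x+n)]$ for integer $n$, the envelope at the $n$-th grating lobe becomes $\operatorname{sinc}^2(\tilde{\mu}_k \Delta T + n) = \sin^2(\pi \tilde{\mu}_k \Delta T)/[\pi^2 (\tilde{\mu}_k \Delta T + n)^2]$. Because $f^2 = 1$ at every grating lobe, the height of the spoofing term there is governed entirely by this envelope, which is strictly decreasing in $|\tilde{\mu}_k \Delta T + n|$. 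Hence the dominant peak is the grating lobe that minimizes $|\tilde{\mu}_k \Delta T + n|$ over $n$, i.e. the one nearest to $\mu = 0$.

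Finally I would translate this minimizing index back into a frequency. Writing $\tilde{\mu}_k = q/\Delta T + \Delta \tilde{\mu}_k$ with integer $q$ and $\Delta \tilde{\mu}_k = \tilde{\mu}_k \mod{\frac{1}{\Delta T}} \in [0, 1/\Delta T)$, the grating lobe closest to DC is the one at $\mu = \Delta \tilde{\mu}_k$ (equivalently, index $n = -q$), which proves the claim. I expect the main obstacle to be rigorously discarding the interlobe ripple of $f^2$ together with the envelope's variation across a single lobe, i.e. justifying that the maximizer sits exactly on a grating lobe rather than being pulled slightly off it by the monotone envelope; this is precisely where the large-$M$ (and large-$K$) regime invoked in Lemma~\ref{lemma 1} is needed. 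A secondary delicacy is the folding convention: when $\Delta \tilde{\mu}_k$ exceeds $1/(2\Delta T)$ the lobe formally nearest DC lies at $\Delta \tilde{\mu}_k - 1/\Delta T$, so the statement should be read modulo the fundamental Doppler interval $1/\Delta T$, consistent with the definition of $\Delta \tilde{\mu}_k$.
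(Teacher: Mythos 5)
Your proposal is correct and follows essentially the same route as the paper's proof: locate the peaks of the periodic kernel $f^2(K,\pi \Delta T(\mu - \tilde{\mu}_k))$ at the grating lobes $\mu = \tilde{\mu}_k + n/\Delta T$, observe that the height of each such peak is set by the envelope $\operatorname{sinc}^2(\mu\Delta T) = \operatorname{sinc}^2(\tilde{\mu}_k\Delta T + n)$, and conclude that the dominant one is the lobe nearest to zero frequency, i.e., $\Delta\tilde{\mu}_k = \tilde{\mu}_k \bmod \frac{1}{\Delta T}$. Your added details (the explicit envelope formula at the lobes, the $O(1/K^2)$ interlobe-ripple bound, and the folding caveat when $\Delta\tilde{\mu}_k > \frac{1}{2\Delta T}$) go slightly beyond the paper's argument, which simply resolves the latter ambiguity by restricting attention to the positive frequency closest to zero.
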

\begin{proof}
See the proof in the conference version \cite{shui2025sensing}.
\end{proof}
\noindent
Given the result in Lemma \ref{lemma 2}, the possible spoofing frequency is restricted to the range $\left(0, \frac{1}{\Delta T}\right]$. In other words, an \ac{RIS} capable of changing its phase shift more frequently, i.e., operating with a smaller $\Delta T$, will have a broader range of spoofing frequencies $\Delta \tilde{\mu}$. Moreover, the feasible spoofing frequency set should be redefined as $\mathcal{A}_k \triangleq \left\{ \Delta \tilde{\mu}_k\mid C_k(\hat{\tau}_{k}, \Delta\tilde{\mu}_k) \geq  C_k(\hat{\tau}_{k}, \mu_{k})\right\}$, which is derived next.
\begin{theorem}
\label{theorem 1}
Under the condition on $M$ in Lemma \ref{lemma 1}, the feasible spoofing frequency set $\mathcal{A}_k$ is given in \eqref{feasible set}.
\end{theorem}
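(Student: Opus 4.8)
The plan is to evaluate both sides of the defining inequality $C_k(\hat{\tau}_k,\Delta\tilde{\mu}_k)\ge C_k(\hat{\tau}_k,\mu_k)$ using the decoupled two‑term approximation in \eqref{match filter approx}, which is legitimate precisely because Lemma~\ref{lemma 1} restricts attention to $\tilde{\mu}_k\notin\mathcal{A}_{k,\varnothing}$, so the legitimate and spoofing lobes do not overlap. First I would evaluate the right‑hand side at the true Doppler peak $\mu=\mu_k$. The legitimate term attains its maximum there, $\operatorname{sinc}^2(T(\mu_k-\mu_k))=1$, contributing $PT^2\gamma_{\textrm{B}}^2 C_{\textrm{V},k}$, whereas the spoofing term carries the factor $f^2(K,\pi\Delta T(\mu_k-\tilde{\mu}_k))$; since $\mu_k-\tilde{\mu}_k\neq n/\Delta T$ by the assumption $\tilde{\mu}_k\notin\mathcal{A}_{k,\varnothing}$, this Dirichlet‑type kernel is bounded away from its unit peak and is $O(1/K)$, hence negligible. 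This gives $C_k(\hat{\tau}_k,\mu_k)\approx PT^2\gamma_{\textrm{B}}^2 C_{\textrm{V},k}$.

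Second, I would evaluate the left‑hand side at $\mu=\Delta\tilde{\mu}_k$, which by Lemma~\ref{lemma 2} is the location of the highest spoofing peak. Here the kernel argument is $\pi\Delta T(\Delta\tilde{\mu}_k-\tilde{\mu}_k)=-n\pi$, an integer multiple of $\pi$, at which $f^2(K,n\pi)=1$; this must be verified from the limiting value of $\sin(Kx)/(K\sin x)$ as $x\to n\pi$, which equals $(-1)^{n(K-1)}$ and therefore squares to unity. Consequently the spoofing term collapses to $PT^2\gamma_{\textrm{B}}^2 M^2 C_{\textrm{R}}\operatorname{sinc}^2(\Delta\tilde{\mu}_k\Delta T)$, while the legitimate term $C_{\textrm{V},k}\operatorname{sinc}^2(T(\Delta\tilde{\mu}_k-\mu_k))$ is negligible because $\Delta\tilde{\mu}_k$ is displaced from $\mu_k$ (the same vanishing‑cross‑term regime used to obtain \eqref{match filter approx}).

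Combining the two evaluations, the membership condition reduces to $M^2 C_{\textrm{R}}\operatorname{sinc}^2(\Delta\tilde{\mu}_k\Delta T)\ge C_{\textrm{V},k}$, i.e.\ $\operatorname{sinc}^2(\Delta\tilde{\mu}_k\Delta T)\ge C_{\textrm{V},k}/(M^2 C_{\textrm{R}})$. I would then substitute the definitions $C_{\textrm{V},k}=|\beta_{\textrm{V},k}|^2 f^2(N_t,\tfrac{\pi}{2}(\cos\theta_{0,k}-\cos\theta_k))$ and $C_{\textrm{R}}=|\beta_{\textrm{R}}|^2 f^2(N_t,\tfrac{\pi}{2}(\cos\theta_{0,k}-\cos\theta_{\textrm{R}}))$, together with $|\beta_{\textrm{V},k}|^2/|\beta_{\textrm{R}}|^2=\kappa_{\textrm{V}}d_{\textrm{R}}^4/(\kappa_{\textrm{R}}d_k^4)$ and $\kappa_{\textrm{R}}=4\pi\eta S^2/\lambda^2$, to express the threshold purely in physical quantities and intersect with the admissible range $(0,1/\Delta T]$ from Lemma~\ref{lemma 2}, yielding \eqref{feasible set}. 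A useful consistency check is that $\tfrac{1}{M}\sqrt{C_{\textrm{V},k}/C_{\textrm{R}}}$ reproduces exactly the ratio appearing in the Lemma~\ref{lemma 1} bound on $M$, so that bound guarantees the threshold is below unity and the set is nonempty; since $\operatorname{sinc}^2(\cdot)$ is monotonically decreasing on $(0,1)$, $\mathcal{A}_k$ is an interval whose upper endpoint is the unique root of the resulting sinc equation.

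The main obstacle I anticipate is rigorously justifying the negligibility of the \emph{secondary} term on each side, namely controlling $f^2(K,\cdot)$ away from its peaks (right‑hand side) and $\operatorname{sinc}^2(T(\cdot))$ away from $\mu_k$ (left‑hand side), rather than merely invoking \eqref{match filter approx}; the clean evaluation $f^2(K,n\pi)=1$ and careful bookkeeping of which $\beta$‑factors enter as squared magnitudes are the remaining technical points.
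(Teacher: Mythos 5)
Your overall strategy is the right one---and is the same one the paper's result implies: evaluate the two-term approximation \eqref{match filter approx} at the two candidate peaks $\mu=\Delta\tilde{\mu}_k$ and $\mu=\mu_k$, use the periodicity of $f^2(K,\cdot)$ together with $f^2(K,n\pi)=1$ (your limit computation $(-1)^{n(K-1)}$ is correct), and compare. However, there is a genuine gap: after these evaluations you discard the two ``secondary'' terms---the spoofing-lobe leakage $M^2 C_{\textrm{R}}\operatorname{sinc}^2(\mu_k\Delta T)f^2(K,\pi\Delta T(\mu_k-\Delta\tilde{\mu}_k))$ at $\mu=\mu_k$ and the legitimate-lobe leakage $C_{\textrm{V},k}\operatorname{sinc}^2(T(\mu_k-\Delta\tilde{\mu}_k))$ at $\mu=\Delta\tilde{\mu}_k$---and reduce the membership test to $M^2C_{\textrm{R}}\operatorname{sinc}^2(\Delta\tilde{\mu}_k\Delta T)\ge C_{\textrm{V},k}$. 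But \eqref{feasible set} is exactly the inequality obtained \emph{without} that extra approximation: rearranging $C_k(\hat{\tau}_k,\Delta\tilde{\mu}_k)\ge C_k(\hat{\tau}_k,\mu_k)$ term by term gives
\begin{equation*}
M^2 C_{\textrm{R}}\left[\operatorname{sinc}^2(\Delta\tilde{\mu}_k\Delta T)-\operatorname{sinc}^2(\mu_k\Delta T)f^2\bigl(K,\pi\Delta T(\mu_k-\Delta\tilde{\mu}_k)\bigr)\right]-C_{\textrm{V},k}\left[1-\operatorname{sinc}^2\bigl(T(\mu_k-\Delta\tilde{\mu}_k)\bigr)\right]\ge 0,
\end{equation*}
with all four terms retained. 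So your argument proves a different (cruder) characterization of $\mathcal{A}_k$, not the one claimed in Theorem~\ref{theorem 1}.

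The discarding step is also not uniformly justifiable. Excluding $\tilde{\mu}_k\in\mathcal{A}_{k,\varnothing}$ only rules out $\mu_k-\tilde{\mu}_k$ being an \emph{exact} integer multiple of $1/\Delta T$; it does not bound $\pi\Delta T(\mu_k-\Delta\tilde{\mu}_k)$ away from $0$ modulo $\pi$, so your $O(1/K)$ estimate for the Dirichlet-type kernel fails precisely when $\Delta\tilde{\mu}_k$ is close to $\mu_k$. In that regime both bracketed terms in \eqref{feasible set} tend to zero and the leakage terms dominate the comparison, which is why the paper keeps them; your simplified condition would misclassify such frequencies whenever $M^2C_{\textrm{R}}\ge C_{\textrm{V},k}$. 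Relatedly, your closing claims---rewriting the threshold ``purely in physical quantities'' and concluding that $\mathcal{A}_k$ is an interval with a unique sinc-root endpoint---do not match \eqref{feasible set} and are not warranted: the retained $f^2(K,\cdot)$ factor is oscillatory in $\Delta\tilde{\mu}_k$, so the set defined by \eqref{feasible set} need not be an interval. The fix is simple: stop at the rearranged four-term inequality above, which \emph{is} \eqref{feasible set}; no further approximation is needed or wanted.
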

\begin{proof}
See the proof in the conference version \cite{shui2025sensing}.
\end{proof}

Theorem \ref{theorem 1} shows that, if the \ac{RIS} selects a spoofing frequency $\Delta \tilde{\mu}_k\in \mathcal{A}_k$, the estimated Doppler shift will be spoofed as $\hat{\mu}_{k} = \Delta \tilde{\mu}_k$. According to the derived result in \eqref{feasible set}, the range of feasible spoofing frequencies expands as more reflecting elements are deployed at the \ac{RIS}. In addition, the beam orientation $\theta_{0,k}$ also influences the set $\mathcal{A}_k$. Generally, when $\theta_{0,k}$ is closer to the \ac{VU}'s \ac{AoD} $\theta_{k}$, the spoofing set $\mathcal{A}_k$ becomes narrower, whereas a beam direction closer to the \ac{RIS}'s \ac{AoD} $\theta_{\textrm{R}}$ leads to a wider set. This phenomenon poses a critical challenge when \ac{ISAC} is leveraged for beam tracking. For instance, when a \ac{VU} moves through the coverage of the \ac{RSU}, there inevitably exists moments when the beam is steered closer to the \ac{RIS}. At such instances, the \ac{RIS} can select a spoofing frequency $\Delta \tilde{\mu}_k$ that results significant beam misalignment in the subsequent tracking step. More critically, this misalignment may accumulate over time, leading the \ac{RSU} to lose track of the \ac{VU} eventually and thereby jeopardizing the \ac{ISAC} sensing process.
\begin{figure*}[t]
\vspace{-15pt}
\begin{equation}
\vspace{-3pt}
    \label{feasible set}
    \mathcal{A}_k = \left\{ \Delta \tilde{\mu}_k \mid M^2 C_\textrm{R} \left[ \operatorname{sinc}^2(\Delta \tilde{\mu}_k\Delta T) - \operatorname{sinc}^2(\mu_{k} \Delta T)f^2(K,\pi \Delta T (\mu_{k} - \Delta \tilde{\mu}_k)) \right] - C_{\textrm{V},k} \left[ 1 - \operatorname{sinc}^2(T (\mu_{k} - \Delta \tilde{\mu}_k)) \right] \geq 0  \right\}.
\end{equation}
\vspace{-25pt}
\end{figure*}
\subsubsection{Impact on \ac{AoD} \ac{MLE}}
Given the estimated $\hat{\tau}_{k}$ and $\hat{\mu}_{k}$, the \ac{RSU} will continue to estimate $\theta_{k}$. First, the compensated and normalized echo is given by:
\vspace{-6pt}
\begin{equation}
\label{compensate signal}
\hat{\boldsymbol{y}}_{k} \triangleq \frac{1}{\sqrt{P} \gamma_{\textrm{B}}} \int_{0}^{T} \boldsymbol{y}_{k}(t) q^{*}(t - \hat{\tau}_k) e^{-j 2 \pi \Delta \tilde{\mu}_kt} dt.
\end{equation}
We next define a perfect \ac{MLE} $\tilde{\theta}_{k}$ obtained without spoofing and examine the deviation between the spoofed \ac{MLE} $\hat{\theta}_k$ and $\tilde{\theta}_{k}$. First, we rewrite \eqref{compensate signal} as:
\vspace{-6pt}
\begin{equation}
\begin{aligned}
\hat{\boldsymbol{y}}_{k}
= & \ T \beta_{\textrm{V},k}  \boldsymbol{b}_B(\theta_{k}) h(\theta_{k}, \theta_{0,k}) e^{-j\pi(\Delta \tilde{\mu}_k- \mu_{k})T} \times \\
& \quad\operatorname{sinc}(T (\Delta \tilde{\mu}_k- \mu_{k})) + M T \beta_{\textrm{R}} \boldsymbol{b}_{\textrm{B}}(\theta_{\textrm{R}}) \times \\
& \quad h(\theta_{\textrm{R}}, \theta_{0,k}) e^{j \pi \Delta \tilde{\mu}_k\Delta T} \operatorname{sinc}(\Delta \tilde{\mu}_k\Delta T) + \hat{\boldsymbol{z}}_{\textrm{E}},
\end{aligned}
\end{equation}
where $\hat{\boldsymbol{z}}_{\textrm{E}} = \frac{\tilde{\boldsymbol{z}}_{\textrm{E}}(\hat{\tau}, \Delta \tilde{\mu}_k) }{\sqrt{P}\gamma_{\textrm{B}}} \sim \mathcal{C N}\left(\boldsymbol{0}_{N_r}, \frac{\sigma^2T}{P \gamma_{\textrm{B}}^2} \boldsymbol{I}_{N_r}\right)$ and $h(\theta_1, \theta_2) = e^{-\frac{j \pi(N_t -1)}{2} [ \cos{\theta_1} - \cos{\theta_2} ] } f(N_t, \frac{\pi}{2} (\cos{\theta_1} - \cos{\theta_{2}}) )$. Similar to \eqref{MLE spoofed}, we define the perfect \ac{MLE} as follows:
\vspace{-6pt}
\begin{equation}
\label{MLE perfect}
    \tilde{\theta}_{k}=\arg \max _{\theta_{k}} p( \tilde{\boldsymbol{y}}_k \mid \theta_{k}),
\end{equation}
where $\tilde{\boldsymbol{y}}_k = T \beta_{\textrm{V},k} \boldsymbol{b}_B(\theta_{k}) h(\theta_{k}, \theta_{0,k}) + \hat{\boldsymbol{z}}_{\textrm{E}}$. Here, $\tilde{\boldsymbol{y}}_k$ represents the normalized received signal compensated by perfect estimation $\hat{\tau}_{k} = \tau_{k}$ and $\hat{\mu}_{k} = \mu_{k}$, without the \ac{RIS} spoof. Thus, $\tilde{\theta}_k$ in \eqref{MLE perfect} is the best \ac{MLE} on $\theta_{k}$ we can obtain. 

Next, we derive the deviation between the perfect \ac{MLE} $\tilde{\theta}_{k}$ and the spoofed \ac{MLE} $\hat{\theta}_{k}$ in the following theorem.
\begin{theorem}
\label{theorem 2}
Assume the perfect delay estimation $\hat{\tau}_k = \tau_{k}$ and spoofed Doppler shift estimation $\hat{\mu}_{k} = \Delta \tilde{\mu}$ are obtained under \ac{RIS} spoofing. The spoofed \ac{MLE} $\hat{\theta}_{k}$ can be given by:
\vspace{-6pt}
\begin{equation}
\label{MLE spoofed analytical in theorem}
\begin{aligned}
\hat{\theta}_{k} = & \arg \min _{\theta_{k}} \Biggl [  \left\| \tilde{\boldsymbol{y}}_k - T \beta_{\textrm{V},k} \boldsymbol{b}_B(\theta_{k}) h(\theta_{k}, \theta_{0,k})\right\|^2 +  \\
\vspace{-4pt}
& \ 2 T \beta_{\textrm{V},k} \sum_{n=1}^{N_r} \Re{\left\{ \Delta y_{k,n} g_{n}^{*}(\theta_{k}, \theta_{0,k}) \right\}} \Biggr ],
\end{aligned}
\end{equation}
where $\Delta \boldsymbol{y}_k \triangleq \hat{\boldsymbol{y}}_{k} - \tilde{\boldsymbol{y}}_k = \left[ \Delta  y_{k,1}, \ldots,  \Delta  y_{k,N_r} \right]^{T}$. Moreover, the perfect \ac{MLE} $\hat{\theta}_{k}$ defined in \eqref{MLE perfect} can be similarly given by: 
\vspace{-6pt}
\begin{equation}
\label{MLE perfect analytical}
\begin{aligned}
\tilde{\theta}_{k} = \arg \min_{\theta_{k}} \left\|\tilde{\boldsymbol{y}}_k - T\beta_{\textrm{V},k} \boldsymbol{b}_B(\theta_{k}) h(\theta_{k}, \theta_{0,k})\right\|^2.
\end{aligned}
\end{equation}
\end{theorem}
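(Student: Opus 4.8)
The plan is to exploit the fact that, because the compensated noise $\hat{\boldsymbol{z}}_{\textrm{E}}$ is circularly-symmetric Gaussian with white covariance $\frac{\sigma^2 T}{P\gamma_{\textrm{B}}^2}\boldsymbol{I}_{N_r}$, the \ac{MLE} in \eqref{MLE spoofed} collapses to a least-squares fit against the \ac{VU}-only signal template. Since $\hat{\tau}_k$, $\hat{\mu}_k=\Delta\tilde{\mu}_k$, the beam direction $\theta_{0,k}$, and $\beta_{\textrm{V},k}$ are all treated as known, the only unknown in the likelihood $p(\hat{\boldsymbol{y}}_{k}\mid\theta_k)$ is $\theta_k$; the density is Gaussian with mean $T\beta_{\textrm{V},k}\boldsymbol{b}_B(\theta_k)h(\theta_k,\theta_{0,k})$ and a covariance independent of $\theta_k$. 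Hence maximizing the log-likelihood is equivalent to minimizing the residual norm $\|\hat{\boldsymbol{y}}_{k}-T\beta_{\textrm{V},k}\boldsymbol{b}_B(\theta_k)h(\theta_k,\theta_{0,k})\|^2$. Applying the identical argument to $\tilde{\boldsymbol{y}}_k$ in \eqref{MLE perfect} immediately gives \eqref{MLE perfect analytical}, so the perfect-\ac{MLE} claim needs no further work.

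For the spoofed estimator, I would substitute the decomposition $\hat{\boldsymbol{y}}_{k}=\tilde{\boldsymbol{y}}_k+\Delta\boldsymbol{y}_k$ into this least-squares objective and expand the squared norm as a Hermitian inner product. Writing $\boldsymbol{\mu}(\theta_k)\triangleq T\beta_{\textrm{V},k}\boldsymbol{b}_B(\theta_k)h(\theta_k,\theta_{0,k})$, the expansion of $\|(\tilde{\boldsymbol{y}}_k-\boldsymbol{\mu}(\theta_k))+\Delta\boldsymbol{y}_k\|^2$ produces three groups of terms: (i) $\|\tilde{\boldsymbol{y}}_k-\boldsymbol{\mu}(\theta_k)\|^2$, which is exactly the perfect-\ac{MLE} objective of \eqref{MLE perfect analytical}; (ii) the terms $\|\Delta\boldsymbol{y}_k\|^2$ and $2\Re\{\tilde{\boldsymbol{y}}_k^{H}\Delta\boldsymbol{y}_k\}$, both independent of $\theta_k$ and therefore discardable inside the $\arg\min$; and (iii) the cross term proportional to $\Re\{\boldsymbol{\mu}(\theta_k)^{H}\Delta\boldsymbol{y}_k\}$, which is the only spoofing-induced contribution that biases the estimate.

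The remaining step is to rewrite group (iii) in the per-antenna form appearing in \eqref{MLE spoofed analytical in theorem}. Here I would expand $\boldsymbol{\mu}(\theta_k)^{H}\Delta\boldsymbol{y}_k=\sum_{n=1}^{N_r}\mu_n^{*}(\theta_k)\,\Delta y_{k,n}$ element-wise and invoke the same factorization used to pass from \eqref{match filter} to \eqref{match filtering c}: the per-antenna component of the receive-steering-times-transmit-gain product $\boldsymbol{b}_B(\theta_k)h(\theta_k,\theta_{0,k})$ is proportional to the per-antenna gain function $g_{n}(\theta_k,\theta_{0,k})$, with the array normalization absorbed exactly as in that derivation. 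Substituting this identity and collecting the real part converts the cross term into $2T\beta_{\textrm{V},k}\sum_{n=1}^{N_r}\Re\{\Delta y_{k,n}g_{n}^{*}(\theta_k,\theta_{0,k})\}$, and adding back group (i) yields \eqref{MLE spoofed analytical in theorem}.

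I expect the main obstacle to be the bookkeeping in group (iii) rather than the norm expansion, which is routine. One must carefully isolate the $\theta_k$-dependent part of the complex inner product while discarding the constant pieces, and then map the Hermitian product $\boldsymbol{\mu}(\theta_k)^{H}\Delta\boldsymbol{y}_k$ onto the stated $\Re\{\,\cdot\,g_{n}^{*}\}$ form with the correct conjugation, overall sign, and array-normalization conventions. The algebraic identity relating $\boldsymbol{b}_B(\theta_k)h(\theta_k,\theta_{0,k})$ to $\{g_{n}\}$ is precisely what makes this reduction clean, so verifying that identity (and the convention under which $\beta_{\textrm{V},k}$ can be pulled outside the real part) is the crux of the argument.
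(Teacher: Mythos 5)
Your proposal is correct and takes essentially the same route as the paper's proof: the white circularly-symmetric Gaussian noise reduces both estimators to least-squares fits against the \ac{VU}-only template $\boldsymbol{\mu}(\theta_k)=T\beta_{\textrm{V},k}\boldsymbol{b}_B(\theta_k)h(\theta_k,\theta_{0,k})$ (the spoofed one being a mismatched fit of $\hat{\boldsymbol{y}}_{k}$, justified because the \ac{RSU} is unaware of the \ac{RIS}), followed by the decomposition $\hat{\boldsymbol{y}}_{k}=\tilde{\boldsymbol{y}}_k+\Delta\boldsymbol{y}_k$, expansion of the squared norm, discarding of the $\theta_k$-independent terms $\|\Delta\boldsymbol{y}_k\|^2$ and $2\Re\{\tilde{\boldsymbol{y}}_k^{H}\Delta\boldsymbol{y}_k\}$, and per-antenna rewriting of the remaining cross term via $g_n$. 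This is exactly the structure of the paper's argument, so no new idea is missing.

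One remark on the step you yourself flag as the crux: your caution about the overall sign is warranted. Carried out exactly, the expansion leaves the $\theta_k$-dependent cross term as $-2\Re\{\boldsymbol{\mu}(\theta_k)^{H}\Delta\boldsymbol{y}_k\}$, so with $\Delta\boldsymbol{y}_k\triangleq\hat{\boldsymbol{y}}_{k}-\tilde{\boldsymbol{y}}_k$ your plan produces the objective in \eqref{MLE spoofed analytical in theorem} with a minus sign in front of $2T\beta_{\textrm{V},k}\sum_{n}\Re\{\Delta y_{k,n}g_{n}^{*}(\theta_k,\theta_{0,k})\}$ (and with a $1/\sqrt{N_r}$ normalization, and $\beta_{\textrm{V},k}$ movable outside $\Re\{\cdot\}$ only if its phase is absorbed, since $\beta_{\textrm{V},k}$ is complex). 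The paper's own derivation contains a compensating sign slip in its intermediate expansion, which is how it lands on the stated $+$ sign; this discrepancy is therefore inherited from the paper rather than a defect of your approach, but a fully rigorous write-up of your plan should either flip the sign in the final expression or equivalently redefine $\Delta\boldsymbol{y}_k\triangleq\tilde{\boldsymbol{y}}_k-\hat{\boldsymbol{y}}_{k}$.
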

\begin{proof}
See the proof in the conference version \cite{shui2025sensing}.
\end{proof}
\noindent
From Theorem \ref{theorem 2}, we can observe that the spoofed \ac{MLE} deviates from the perfect \ac{MLE} because of the term $2 T \beta_{\textrm{V},k} \sum_{n=1}^{N_r} \Re{\left\{ \Delta y_{k,n} g_{n}^{*}(\theta_{k}, \theta_{0,k}) \right\}}$, which is introduced by the malicious \ac{RIS}. Although it is challenging to derive a closed-form expression for $\left| \tilde{\theta}_{k} - \hat{\theta}_{k}\right|$, we can still find that a spoofed \ac{MLE} $ \hat{\theta}_{k} \neq \tilde{\theta}_{k}$ will be obtained if the location of the minimum of $\left\| \hat{\boldsymbol{y}} - T \beta_{\textrm{V},k} \boldsymbol{b}_B(\theta_{k}) h(\theta_{k}, \theta_{0,k})\right\|^2$ is altered due to the addition of the term $2 T \beta_{\textrm{V},k} \sum_{n=1}^{N_r} \Re{\left\{ \Delta y_{k,n} g_{n}^{*}(\theta_{k}, \theta_{0,k}) \right\}}$. To further illustrate the impact of \ac{RIS} spoofing on the \ac{AoD} estimation, we derive $\Delta \boldsymbol{y}_k$ as follows:
\vspace{-4pt}
\begin{equation}
\label{Delta y}
\begin{aligned}
\Delta \boldsymbol{y}_{k}
= & \ T \beta_{\textrm{V},k}  \boldsymbol{b}_B(\theta_{k}) h(\theta_{k}, \theta_{0,k}) \times \\ 
& \quad \left[ e^{-j\pi(\Delta \tilde{\mu}_k- \mu_{k})T} \operatorname{sinc}(T (\Delta \tilde{\mu}_k- \mu_{k})) - 1\right] \\
& + M T \beta_{\textrm{R}} \boldsymbol{b}_{\textrm{B}}(\theta_{\textrm{R}}) h(\theta_{\textrm{R}}, \theta_{0,k}) e^{j \pi \Delta \tilde{\mu}_k\Delta T} \operatorname{sinc}(\Delta \tilde{\mu}_k\Delta T).
\end{aligned}
\end{equation}
From \eqref{Delta y}, we can observe that the \ac{RIS}'s impact on the \ac{AoD} estimation stems from the second term. Particularly, given the same spoofing frequency $\Delta \tilde{\mu}$, an \ac{RIS} with a larger number of reflecting elements $M$, higher path gain $\beta_{\textrm{R}}$, and an \ac{AoD} $\theta_{\textrm{R}}$ closer to the \ac{RSU}'s beam direction $\theta_{0,k}$ tends to induce more significant \ac{AoD} estimation bias. 

Theorems \ref{theorem 1} and \ref{theorem 2} explicitly demonstrate estimation bias of $\hat{\theta}_{k}$ and $\hat{\mu}_k$ for an arbitrary time slot $k$. However, when a full trajectory composed of multiple time slots is considered, a single spoofed sensing outcome can be easily detected due to its inconsistency with adjacent time slots. Therefore, from the attacker's view, we next examine the feasibility of such \ac{RIS} spoofing from a trajectory-level perspective, with the goal of generating complete and plausible fake trajectories.

\vspace{-10pt}
\section{Trajectory-level Impact of Sensing Spoofing}
\label{Section IV}
In this section, we further investigate the phase shift design of the \ac{RIS} for spoofing a complete trajectory at the \ac{RSU}. Specifically, from the attacker’s perspective, the objective is to generate a plausible trajectory that satisfies real-world spatial-temporal consistency, thereby increasing the difficulty of attack detection. To this end, we formulate an \ac{MDP} for optimizing the spoofing frequency sequence on the \ac{RIS}, and solve it using a \ac{PPO} framework enhanced with action mask.

\vspace{-8pt}
\subsection{Problem Formulation}
The aim of the malicious \ac{RIS} is to select the spoofing frequency to induce spoofed trajectory $\hat{\boldsymbol{S}} = [\hat{\boldsymbol{s}}_1, \ldots, \hat{\boldsymbol{s}}_K]$ at the \ac{RSU}, with $\hat{\boldsymbol{s}}_k = \left[ \hat{x}_k, \hat{y}_k, \hat{v}_k \right]$. In practice, the regular trajectory of a \ac{VU} should follow certain spatial and temporal consistency. Thus, to generate plausible sensing outcomes that cannot be easily identified, the spoofing frequency sequence $ \Delta \boldsymbol{\tilde{\mu}} = \left[ \Delta\tilde{\mu}_1, \ldots, \Delta\tilde{\mu}_K \right]$ should be designed carefully.

We begin by defining the spatial and temporal consistency of a trajectory. Formally, a trajectory $\boldsymbol{S} = [\boldsymbol{s}_1, \ldots, \boldsymbol{s}_K]$ with $\boldsymbol{s}_k \in \mathbb{R}^m$ follows spatial and temporal consistency when a vector-valued constraint $\boldsymbol{c}(\boldsymbol{s}_k, \boldsymbol{s}_{k+1}) : \mathbb{R}^m \times \mathbb{R}^m \to \mathbb{R}^n$ satisfies $\boldsymbol{c}(\boldsymbol{s}_k, \boldsymbol{s}_{k+1}) \succeq \mathbf{0}$ for all $k \in \left\{ 1, \ldots, K-1 \right\}$.
 Here, the expression of constraint $\boldsymbol{c}(\cdot, \cdot)$ and dimension $n$ are both related to the physical law that a real-world trajectory should follow. For instance, a \ac{VU} is subject to physical constraints on acceleration. Thus, the constraint $\boldsymbol{c}(\boldsymbol{s}_{k}, \boldsymbol{s}_{k+1}) \succeq \mathbf{0}$ may encode condition $v_{k+1} - v_k \leq a_{\textrm{max}} T$, where $a_{\textrm{max}}$ is the maximum acceleration. To formulate a comprehensive $\boldsymbol{c}(\boldsymbol{s}_k, \boldsymbol{s}_{k+1})$ is challenging since the physical constraints in real world can be complicated. For tractability, we consider a simplified but practical consistency:
\vspace{-6pt}
\begin{equation}
\label{consistency}
\boldsymbol{c}(\boldsymbol{s}_k, \boldsymbol{s}_{k+1})  =  
\left[
\begin{array}{c}
a_{\textrm{max}} T - (v_{k+1} - v_k) \\
(v_{k+1} - v_k) - a_{\textrm{min}} T \\
\delta_x - \left\| x_{k+1} - x_{k} -v_k T \right\|\\
\delta_y - \left\| y_{k+1} - y_{k}\right\|
\end{array}
\right],
\end{equation}
where $a_{\textrm{min}}$ is the minimum acceleration (maximum deceleration), $\delta_x$ is the estimation error margin along x-axis, and $\delta_y$ is the estimation error margin along y-axis. The first two terms in \eqref{consistency} capture the change in the \ac{VU}'s velocity between adjacent steps, while the last two terms represent the change in the \ac{VU}'s location. Thus, we can formulate the spoofing frequency design as a feasibility problem given by:
\vspace{-6pt}
\begin{subequations}
\label{opt1}
\begin{IEEEeqnarray}{s,rCl'rCl'rCl}
& \text{find} &\quad& \Delta \boldsymbol{\tilde{\mu}} \in \mathbb{R}^T \label{obj1}\\
&\text{s.t.} && \Delta\tilde{\mu}_k \in \mathcal{A}_{k}, \forall k = 1, \ldots, K,  \label{c1-1} \\
&&& \boldsymbol{c}(\hat{\boldsymbol{s}}_k, \hat{\boldsymbol{s}}_{k+1}) \succeq \mathbf{0}, \forall k = 0, \ldots, K-1, \label{c1-2}
\end{IEEEeqnarray}
\end{subequations}
where \eqref{c1-1} indicates that the spoofing frequency at each time slot should be selected from the corresponding feasible spoofing frequency set and \eqref{c1-2} ensures that the spatial temporal consistency is satisfied for the whole trajectory. Moreover, we assume an accurate initial sensing, i.e., $\hat{\boldsymbol{s}}_0 = \boldsymbol{s}_0$.

\vspace{-8pt}
\subsection{\Ac{MDP} with State-dependent Action Mask}
The feasibility problem \eqref{opt1} can be interpreted as a sequential decision-making process. Specifically, the \ac{RIS} needs to determine its spoofing frequency $\Delta \hat{\mu}_k$ at time slot $k$ to generate spoofed sensing outcome $\hat{\boldsymbol{s}}_k$. Moreover, the consistency constraint given in \eqref{c1-2} should be satisfied to ensure a plausible trajectory at the \ac{RSU}. Note that the \ac{RIS} only needs to know the true state $\boldsymbol{s}_k$ of the \ac{VU} to perform such spoofing (as detailed later in this section), and this state can be easily obtained through the uplink transmission of the \ac{VU} \cite{9724202}. To solve \eqref{opt1}, we next map it into a an \ac{MDP} and propose a \ac{RL}-based approach.

First, we establish the first-order Markov property of the sequential decision-making process by defining its action $\Delta \tilde{\mu}$ and state $\boldsymbol{\xi}_{k} = [\hat{\boldsymbol{s}}_{k-1}, \boldsymbol{s}_k]$, where $\hat{\boldsymbol{s}}_{k-1}$ is the spoofed sensing outcome at time slot $k-1$ and $\boldsymbol{s}_{k}$ is the true state of \ac{VU} at time slot $k$. Our goal is to show that the future state $\boldsymbol{\xi}_{k+1}$ depends only on the current state $\boldsymbol{\xi}_{k}$ and action $\Delta\tilde{\mu}_k$, and is conditionally independent of past states and actions. Recall that the spoofed sensing outcome under $\Delta\tilde{\mu}_k \in \mathcal{A}_k$ in \eqref{state} is given as $\hat{\boldsymbol{s}}_k = \left[ c \tau_k \cos{\hat{\theta}_{k}}, c \tau_k \sin{\hat{\theta}_{k}}, \frac{\Delta\tilde{\mu}_k c}{f_c \cos{\hat{\theta}_{k}}}\right]$, where $\hat{\theta}_{k}$ is determined by the sensing outcome $\hat{\boldsymbol{s}}_{k-1}$, the true state $\boldsymbol{s}_{k}$ of \ac{VU}, and the beam direction $\theta_{0,k}$ according to Theorem \ref{theorem 2}. Thus, we can view $\hat{\boldsymbol{s}}_k$ as a function $\hat{\boldsymbol{s}}_k(\hat{\boldsymbol{s}}_{k-1}, \boldsymbol{s}_{k}, \Delta\tilde{\mu}_k, \theta_{0,k})$. Based on \cite{9171304}, we can further derive that 
\vspace{-5pt}
\begin{equation}
\label{beam}
    \theta_{0,k} \overset{(a)}{=} \theta_{k|k-1} \overset{(b)}{\approx} \hat{\theta}_{k-1} - \frac{\hat{v}_{k-1} \sin{\hat{\theta}_{k-1}}T }{\sqrt{x_{k-1}^2 + y_{k-1}^2}},
    \vspace{-1pt}
\end{equation}
where $\theta_{k|k-1}$ is the predicted \ac{AoD} at time slot $k$, given the estimated \ac{AoD} $\hat{\theta}_{k-1}$ at time slot $k-1$. Specifically, (a) in \eqref{beam} captures the principle of beam tracking, whereby the beam direction $\theta_{0,k}$ at time slot $k$ is determined based on the previously estimated \ac{AoD} of the \ac{VU} at time slot $k-1$. Moreover, term (b) in \eqref{beam} is derived based on geometric relationships, under the assumption that the duration of a single time slot is sufficiently short \cite{9171304,10561505,9947033}. As a result, we can view $\theta_{0,k}$ as a function $\theta_{0,k}(\hat{\boldsymbol{s}}_{k-1})$. Now we can rewrite $\hat{\boldsymbol{s}}_k(\hat{\boldsymbol{s}}_{k-1}, \boldsymbol{s}_{k}, \Delta\tilde{\mu}_k, \theta_{0,k})$ as $\hat{\boldsymbol{s}}_k(\boldsymbol{\xi}_k, \Delta\tilde{\mu}_k)$. Finally, we can conclude that the process satisfies the first-order Markov property since $\boldsymbol{\xi}_{k+1} = [\hat{\boldsymbol{s}}_{k}, \boldsymbol{s}_{k+1}] = [\hat{\boldsymbol{s}}_{k}(\boldsymbol{\xi}_k, \Delta\tilde{\mu}_k), \boldsymbol{s}_{k+1}]$ and $\boldsymbol{s}_k$, determined only by the \ac{VU}, is independent of all past states and actions in the process.

Next, we map problem \eqref{opt1} into an \ac{MDP} $\mathcal{M} = \left\{ \Xi, \mathcal{A}, \mathcal{P}, R, \gamma\right\}$, where $\Xi$ is the state space, $\mathcal{A}$ is the action space, $\mathcal{P}: \Xi \times \Xi \times \mathcal{A} \to \mathbb{R}$ is the stochastic state transition function,
$R: \Xi \times \mathcal{A} \to \mathbb{R}$ is the reward function, and $\gamma$ is the discount factor for future reward. The components of $\mathcal{M}$ is detailed as the following:
\begin{itemize}
    \item \textit{Agent}: The agent is the attacker that controls the malicious \ac{RIS}.
    \item \textit{Action}: According to Theorem \eqref{theorem 1} and analysis in Section \ref{Section III}, the action at time slot $k$ is defined as the spoofing frequency, i.e., $\Delta\tilde{\mu}_k \in \mathcal{A}_k \subset \mathcal{A}$. In other words, the actual action space $\mathcal{A}_k$ at time slot $k$ is a subspace of $\mathcal{A}$, which varies across different time slots.
    \item \textit{States}: The state $\boldsymbol{\xi}_k$ at time slot $k$ is the spoofed sensing outcome $\hat{\boldsymbol{s}}_{k-1}$ at last time slot $k-1$ and the true state $\boldsymbol{s}_{k}$ of \ac{VU} at time slot $k$, i.e., $\boldsymbol{\xi}_{k} = [\hat{\boldsymbol{s}}_{k-1}, \boldsymbol{s}_k] \in \Xi$.
    \item \textit{Reward}: The reward assigned to the agent at time slot $k$ must reflect how well the spoofed sensing outcome $\hat{\boldsymbol{s}}_k$ satisfies the spatial and temporal consistency with respect to $\hat{\boldsymbol{s}}_{k-1}$. Thus, we define $R(\boldsymbol{\xi}_k, \Delta\tilde{\mu}_k) = \sum_{i} \min(0, c_i(\hat{\boldsymbol{s}}_{k-1}, \hat{\boldsymbol{s}}_{k})) $, where $c_i(\hat{\boldsymbol{s}}_{k-1}, \hat{\boldsymbol{s}}_{k})$ is the $i$-th element in $\boldsymbol{c}(\hat{\boldsymbol{s}}_{k-1}, \hat{\boldsymbol{s}}_{k})$ \footnote{The reward $R(\boldsymbol{\xi}_k, \Delta\tilde{\mu}_k)$ is accessible to the \ac{RIS} as both $\hat{\boldsymbol{s}}_{k-1}$ and $\hat{\boldsymbol{s}}_{k}$ can be derived by the \ac{RIS} through Theorems \ref{theorem 1} and \ref{theorem 2}.}. Specifically, when $c_i(\hat{\boldsymbol{s}}_{k-1}, \hat{\boldsymbol{s}}_{k}) \geq 0$ for $\forall i$ is satisfied, the reward is maximized as $R(\boldsymbol{\xi}_k, \Delta\tilde{\mu}_k) = 0$, which captures the condition that the spoofed sensing outcome satisfies the spatial and temporal consistency and is difficult to distinguish from regular and real-world trajectories. 
\end{itemize}

To solve the sequential decision-making problem in $\mathcal{M}$, we can use any off-the-shelf \ac{RL} framework to find the policy $\pi$ that maximizes the expected discounted reward $J = \mathbb{E}_{\pi} \left[ \sum_{k=0}^{\infty} \gamma^k R(\boldsymbol{\xi}_k, \Delta\tilde{\mu}_k) \right]$. However, one challenge here is that the feasible action of $\mathcal{M}$ depends on the specific state. Particularly, Theorem \ref{theorem 1} shows that $\mathcal{A}_k$ can be viewed as a function $\mathcal{A}_k\left(\theta_{0,k}, \boldsymbol{s}_k \right)$, while other parameters are time-invariant. Since $\theta_{0,k}$ can be viewed as $\theta_{0,k}(\hat{\boldsymbol{s}}_{k-1})$, we can rewrite $\mathcal{A}_k\left(\theta_{0,k}, \boldsymbol{s}_k \right)$ as $\mathcal{A}_k\left(\hat{\boldsymbol{s}}_{k-1}, \boldsymbol{s}_k\right) = \mathcal{A}_k\left(\boldsymbol{\xi}_k\right)$, i.e., the feasible action set of $\mathcal{M}$ varies with its current state. To discard the infeasible actions, we next incorporate a popular \ac{RL} technique named action mask \cite{huang2020closer}.

To use action mask, we first discretize $\mathcal{A} = [0, \frac{1}{\Delta T}]$ into $L$ discrete actions as $\hat{\mathcal{A}} = \left\{ \Delta \mu, \ldots, L\Delta \mu  \right\}$ with the discretization step $\Delta \mu  = \frac{1}{\Delta T L}$. As a result, we obtain a \ac{MDP} of discrete action space and continuous state space. The standard \ac{PPO} framework is selected for solving the \ac{MDP}, as it supports discrete action spaces and ensures stable training through clipped policy updates. Moreover, as a policy-based \ac{RL} framework, \ac{PPO} directly outputs the probabilities of selecting certain actions, which can be seamlessly integrated with the action masking mechanism to satisfy the state-dependent action constraints \eqref{c1-1}. Specifically, we seek to train a policy network of vector output $\boldsymbol{z}_{\theta}(\boldsymbol{\xi}) = \left[ z_{1}(\boldsymbol{\xi}), \ldots, z_{L}(\boldsymbol{\xi}) \right] \in \mathbb{R}^{L}$, where $z_{l}(\boldsymbol{\xi})$ represents the raw logit output for action $l \Delta \mu$ given state $\boldsymbol{\xi}$. We adopt the standard \ac{PPO} training procedure as described in \cite{ppo} and focus on the integration of the action mask mechanism. Typically, the policy in standard \ac{PPO} is derived by applying a softmax over the vector output $\boldsymbol{z}_{\theta}(\boldsymbol{\xi})$, as follows:
\vspace{-4pt}
\begin{equation}
    \pi_\theta(l \Delta \mu \mid \boldsymbol{\xi})=\frac{\exp \left(z_{l}(\boldsymbol{\xi})\right)}{\sum_{l'=1}^L \exp \left(z_{l'}(\boldsymbol{\xi})\right)},
\end{equation}
which is essentially the probability of selecting $\Delta\tilde{\mu} = l \Delta \mu$ given state $\boldsymbol{\xi}$. Due to the state-dependent action, we aim to modify the probability of selecting invalid actions to zero. To this end, the action mask mechanism adjusts the logit output $z_{l}(\boldsymbol{\xi})$ as following:
\vspace{-4pt}
\begin{equation}
\label{adjust logit}
    \tilde{z}_l(\boldsymbol{\xi})=z_l(\boldsymbol{\xi})+\log \left(\alpha+(1-\alpha) m_l(\boldsymbol{\xi})\right), 
\end{equation}
where $\alpha \in (0,1)$ is a small positive constant (e.g., $0.01$) and $m_l(\boldsymbol{\xi})$ is the binary action mask defined as
\vspace{-4pt}
\begin{equation}
\label{action mask}
    m_l(\boldsymbol{\xi})  =
\begin{cases}
    1, 
    &  l \Delta \mu \in \mathcal{A}(\boldsymbol{\xi}), \\
    0,
    & l \Delta \mu \notin \mathcal{A}(\boldsymbol{\xi}).
\end{cases}
\end{equation}
Subsequently, the masked policy is redefined as
\vspace{-4pt}
\begin{equation}
\label{modified policy}
    \hat{\pi}_\theta(l \Delta \mu \mid \boldsymbol{\xi})=\frac{\exp \left(\hat{z}_{l}(\boldsymbol{\xi})\right)}{\sum_{l'=1}^L \exp \left(\hat{z}_{l'}(\boldsymbol{\xi})\right)}.
\end{equation}
Based on \eqref{adjust logit}, \eqref{action mask}, and \eqref{modified policy}, we can observe that the masked policy assigns lower probabilities to invalid actions, reducing their likelihood of being selected to near zero. Given the modified policy in \eqref{modified policy}, the influence of the binary action mask $\boldsymbol{m}(\boldsymbol{\xi}) = \left[ m_1(\boldsymbol{\xi}), \ldots, m_L(\boldsymbol{\xi})\right]$ remains differentiable, thereby ensuring faster and more stable convergence of the \ac{PPO} training process and a valid sensing spoofing in our scenario. The overall \ac{PPO} framework with action mask is shown in Fig.~\ref{PPO_action_mask}.
\begin{figure}[t]
	\centering
	\includegraphics[scale=0.32]{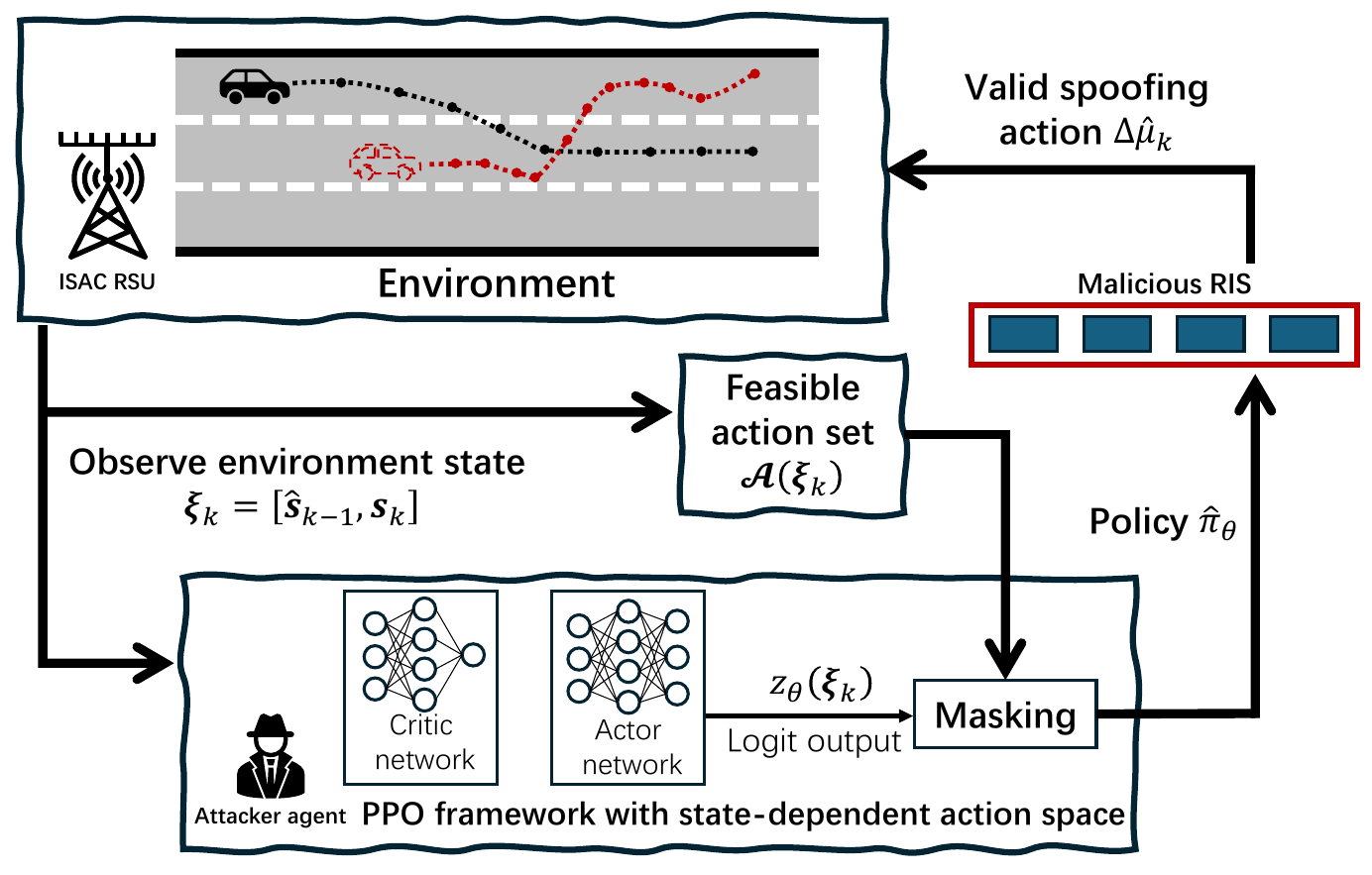}
      \vspace{-10pt}
	\caption{\small{The \ac{PPO} framework with action mask for spoofing the trajectory of \ac{VU}.}}
 \label{PPO_action_mask}
   \vspace{-18pt}
\end{figure}

In presence of trajectory-level sensing spoofing, it is challenging to identify spoofed trajectories via typical deep clustering-based detection method \cite{ruff2018deep}, which extracts the latent representation of a given trajectory. By measuring the distance between the obtained representation and the cluster center of regular trajectories' representations, the spoofed trajectories can be identified. However, since the spoofed trajectory induced by \eqref{opt1} still follows the consistency constraints, its latent representation may not be separable from those of regular trajectories. To address this issue, we observe that real-world trajectories generally exhibit certain intentions (e.g., lane changing), which are difficult for an attacker to imitate under \eqref{c1-1}. Thus, we next propose an \ac{STL}-based attack detection framework which learns interpretable formulas that characterize how regular trajectory evolves over time according to typical driving intentions.

\vspace{-8pt}
\section{STL-based Attack Detection Framework}
\label{Section V}
In this section, from a defender's perspective, we design an attack detection framework at the \ac{RSU}. Specifically, we incorporate a neuro-symbolic structure named TLINet \cite{TLINet} to learn the \ac{STL} formulas for identifying spoofed trajectories.

\vspace{-6pt}
\subsection{\ac{STL} Preliminaries }
Briefly, \ac{STL} is a formal language expressing the complex temporal and logical properties of a signal sequence. Through a series of structural operators, \ac{STL} can quantify how well a signal sequence satisfies some desired objectives. Here we only consider a simplified \ac{STL} without the until operator \cite{Maler2004MonitoringTP}. The syntax and semantics of \ac{STL} are defined as follows:
\begin{definition}
\label{definition 1}
For a trajectory $\boldsymbol{S} = [\boldsymbol{s}_1, \ldots, \boldsymbol{s}_K]$ over $K$ time slots, the \emph{syntax} of the corresponding \ac{STL} formulas is defined recursively as:
\vspace{-4pt}
\begin{equation}
\phi ::= \mu \mid \phi_1 \land \phi_2 \mid \phi_1 \lor \phi_2 \mid \Diamond_{[k_1, k_2]}\phi \mid \Box_{[k_1, k_2]}\phi,
\end{equation}
\noindent
where $\mu := f_\mu(\boldsymbol{s}_k) \geq 0$ is a Boolean predicate with $f_\mu : \mathbb{R}^d \to \mathbb{R}$. $\phi, \phi_1, \phi_2$ are STL formulas. The Boolean operators $\land, \lor$ are conjunction and disjunction, respectively. The temporal operators $\Diamond, \Box$ represent ``eventually" and ``always". Let $0 \leq k_1 \leq k_2 \leq K$ and $k_1, k_2 \in \mathbb{Z}$, $\Diamond_{[k_1, k_2]}\phi$ is true if $\phi$ is satisfied for at least one point $k \in [k_1, k_2] \cap \mathbb{Z}$, while $\Box_{[k_1, k_2]}\phi$ is true if $\phi$ is satisfied for all time points $k \in [k_1, k_2] \cap \mathbb{Z}$.
\end{definition}
\noindent
Definition \ref{definition 1} provides a formal way to describe the temporal and logical properties of a trajectory $\boldsymbol{S}$. Compared with the simple consistency requirement (e.g., \eqref{consistency}), such \ac{STL} formulas capture the inherent temporal patterns of $\boldsymbol{S}$, providing a more semantic and interpretable perspective in identifying whether a certain trajectory is spoofed. To quantify how well a trajectory $\boldsymbol{S}$ satisfies the \ac{STL} formula $\phi$, the quantitative semantics of $\phi$, also called the \ac{STL} robustness, is defined next. 
\begin{definition}
\label{definition 2}
The \emph{quantitative semantics} of an \ac{STL} formula $\phi$ for $\boldsymbol{s}_k$ are defined as:
\begin{align}
\vspace{-4pt}
r(\boldsymbol{s}_k, \mu) &= f_\mu(\boldsymbol{s}_k), \label{ra} \\
r(\boldsymbol{s}_k, \wedge_{i=1}^{n} \phi_i) &= -\max_{i=1,\dots,n} \left[ - r(\boldsymbol{s}_k, \phi_i) \right], \label{rc} \\
r(\boldsymbol{s}_k, \vee_{i=1}^{n} \phi_i) &= \max_{i=1,\dots,n} r(\boldsymbol{s}_k, \phi_i), \label{rd} \\
r(\boldsymbol{s}_k, \Box_{[k_1,k_2]} \phi) &= - \max_{k' \in [k + k_1, k + k_2]} \left[ - r(\boldsymbol{s}_{k'}, \phi) \right], \label{re} \\
r(\boldsymbol{s}_k, \Diamond_{[k_1,k_2]} \phi) &= \max_{k' \in [k + k_1, k + k_2]} r(\boldsymbol{s}_{k'}, \phi). \label{rf}
\end{align}
\end{definition}
\noindent
With Definition \ref{definition 2}, a trajectory $\boldsymbol{S}$ is said to satisfy $\phi$, i.e., $\boldsymbol{S} \models \phi$, if and only if $r(\boldsymbol{s}_0, \phi) \geq 0$. Otherwise, $\boldsymbol{S}$ is said to violate $\phi$, i.e., $\boldsymbol{S} \nvDash \phi$ with $r(\boldsymbol{s}_0, \phi) < 0$. For example, if $r(\boldsymbol{s}_k, \mu) \geq 0$, we can conclude from \eqref{ra} that the Boolean predicate $\mu := f_\mu(\boldsymbol{s}_k) \geq 0$ is true since $r(\boldsymbol{s}_k, \mu) = f_\mu(\boldsymbol{s}_k) \geq 0$. 
Similarly, for \eqref{rc}, if $r(\boldsymbol{s}_k, \wedge_{i=1}^{n} \phi_i) = -\max_{i=1,\dots,n} \left[ - r(\boldsymbol{s}_k, \phi_i) \right] \geq 0$, we can derive that $r(\boldsymbol{s}_k, \phi_i) \geq 0, \forall i= 1, \ldots, n$, which means $\boldsymbol{s}_k$ satisfies all the formulas $\phi_1, \ldots, \phi_n$, i.e., $\wedge_{i=1}^{n} \phi_i$. The rest quantitative semantics \eqref{rd} - \eqref{rf} can be explained in a similar way. Given Definitions \ref{definition 1} and \ref{definition 2}, \ac{STL} formulas can be used by the \ac{RSU} to determine whether the sensed trajectory $\hat{\boldsymbol{S}}$ satisfies certain properties, thereby enabling the detection of potential \ac{RIS} spoofing attacks.

Our goal is to learn an \ac{STL} formula $\phi$ that can accurately distinguish fake trajectories from regular, real-world trajectories, which is essentially an \ac{STL} formula inference problem \cite{10156357}. Specifically, assume that we have a  
labeled trajectory dataset $\mathcal{D} = \left\{ (\boldsymbol{S}_1, c_1) \ldots, (\boldsymbol{S}_D, c_D) \right\}$ with $c_d = 1$ indicating normal trajectory $\boldsymbol{S}_d$ collected without spoofing while $c_d = 0$ indicating abnormal trajectory $\boldsymbol{S}_d$ collected under spoofing. Then, the objective is to learn both structure and parameters (such as $k_1$, $k_2$, and $f_\mu(\cdot)$) of $\phi$ to minimize the misclassification rate $\frac{1}{D} \left| \left\{ \boldsymbol{S}_d \mid \left( \boldsymbol{S}_d \models \phi \wedge c_d = 1 \right) \vee \left( \boldsymbol{S}_d \nvDash \phi \wedge c_d = 0 \right) = 1 \right\} \right|$. Such an \ac{STL}-based attack detection framework can be easily deployed, as it only requires a dataset of vehicle trajectories collected by the \ac{RSU}. These data can be practically obtained via the \ac{RSU}'s sensing functionality and used to learn the \ac{STL} formulas offline, independently of the detection phase. Finally, the learned formula $\phi$ can identify \ac{RIS} spoofing attack once $r(\hat{\boldsymbol{s}}_0, \phi) < 0$ is detected for a sensed trajectory $\hat{\boldsymbol{S}}$.

\vspace{-5pt}
\subsection{Clustering of One-class Dataset}
However, in practice, it is challenging to obtain a labeled data set $\mathcal{D}$ that includes enough spoofed outcomes with $c_d = 0$ because of two reasons: (i) It is easy to obtain normal trajectory since all sensing outcomes collected without spoofing can be defined as normal, while abnormal trajectory is hard to define; (ii) The samples of abnormal trajectories collected under attacks are comparatively limited. Thus, we are faced with a one-class anomaly detection problem \cite{ruff2018deep} where we need to classify normal and abnormal trajectories, while only trajectories labeled as normal are available.

To address this challenge, we propose a self-supervised framework. Specifically, our approach, inspired by pseudo-labeling methods \cite{lee2013pseudo}, first clusters the unlabeled dataset $\mathcal{D}$ into $P$ behavioral categories. For each identified type $p$, a one-vs-rest attack detector is trained based on a new dataset $\mathcal{D}_p$. The trajectory $\boldsymbol{S}_d$ in $\mathcal{D}_p$ is labeled as ``in-class” ($c_d = 1$) if it belongs to the cluster $p$, and ``out-of-class”  ($c_d = 0$) otherwise, serving as pseudo-labels for learning a one-vs-rest attack detector. Subsequently, an \ac{STL} formula $\phi_p$ is learned to characterize the temporal logic pattern for type $p$ and $P$ \ac{STL} formulas $\phi_1, \ldots, \phi_P$ will be obtained in total. During deployment, the \ac{RSU} first classifies $\hat{\boldsymbol{S}}$ into one of the predefined types, e.g., $p$. Then, the corresponding learned \ac{STL} formula $\phi_p$ is used to verify the consistency of $\hat{\boldsymbol{S}}$ through its robustness $r(\hat{\boldsymbol{s}}_0, \phi_p)$. Given $\hat{\boldsymbol{S}}$, sensing attack is detected when $\boldsymbol{S}_d \nvDash \phi$, i.e., $r(\hat{\boldsymbol{s}}_0, \phi_p) < 0$. The overall pipeline for generating multiple one-vs-rest training datasets is illustrated in Fig.~\ref{cluster-figure}.
\begin{figure}[t]
	\centering
	\includegraphics[scale=0.32]{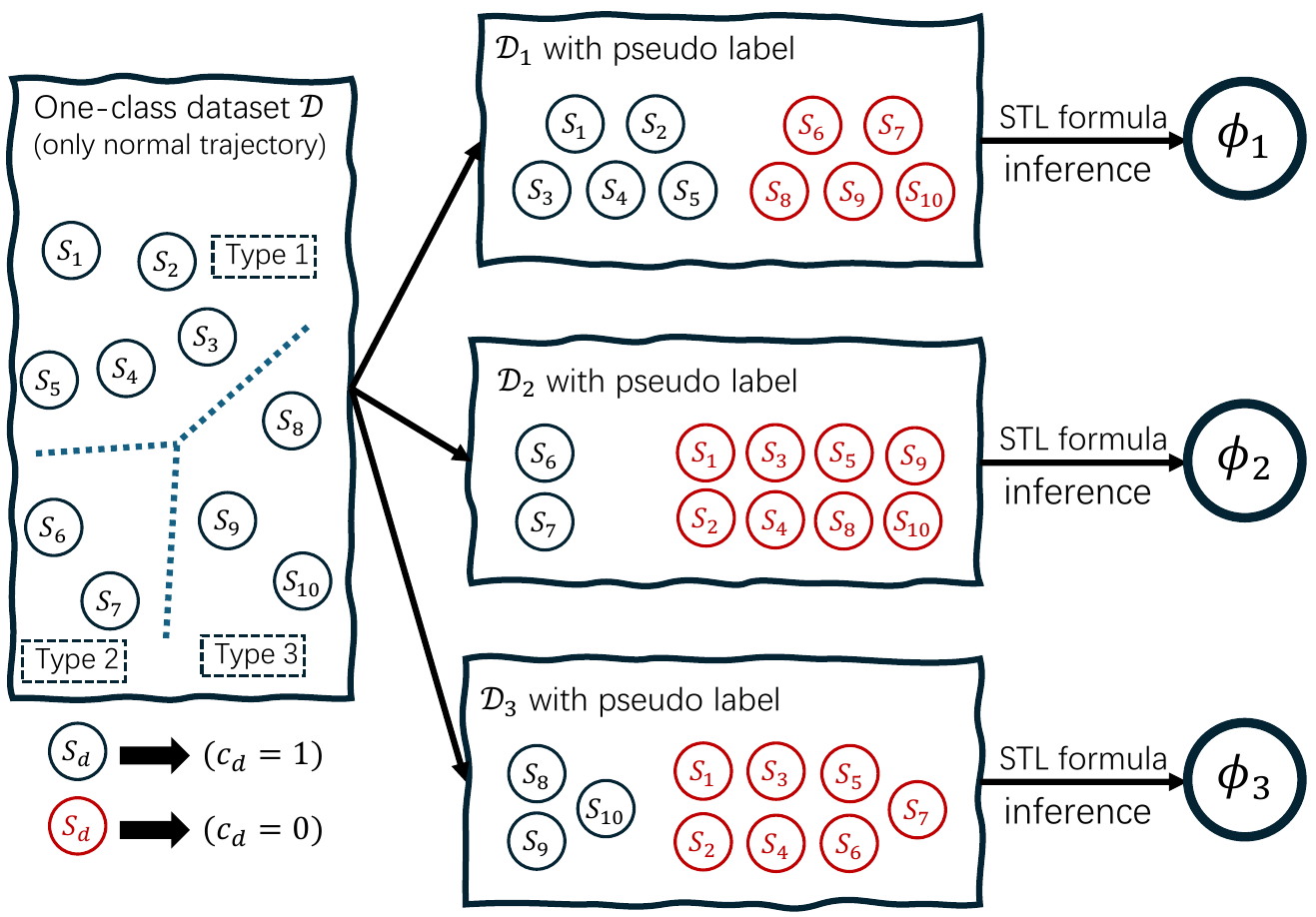}
      \vspace{-8pt}
	\caption{\small{The illustration of clustering a one-class dataset into $P$ pseudo-labeled datasets.}}
    \label{cluster-figure}
      \vspace{-15pt}
\end{figure}

In the proposed framework, we incorporate \ac{DTCR} \cite{NEURIPS2019_1359aa93} to cluster the spatial temporal trajectories in dataset $\mathcal{D}$. Specifically, \ac{DTCR} clusters unlabeled data into multiple categories based on their latent representations. Different from the two-stage approaches that independently extracts latent features and then clusters, \ac{DTCR} jointly optimizes these two process. In short, the latent representations are not only learned for accurate reconstruction, but also for an effective clustering. The construction of $P$ one-vs-rest datasets consists of two components: a) an autoencoder that learns the latent representation of the given trajectory, and b) a K-means clustering algorithm to guide the latent representation learning. 
\subsubsection{Representation Learning}
For the representation learning, we aim to train an encoder $f_{\textrm{En}}(\cdot) : \mathbb{R}^{d \times K} \to \mathbb{R}^m $ that maps the trajectory $\boldsymbol{S}_d$ into latent representations, where $m$ is the dimension of the latent space. The learned representation $\boldsymbol{h}_d = f_{\textrm{En}}(\boldsymbol{S}_d)$ should be able to reconstruct the input trajectory through a paired decoder $f_{\textrm{De}}(\cdot) : \mathbb{R}^m \to \mathbb{R}^{d \times K}$. Specifically, parameterized $f_{\textrm{En}}(\cdot)$ and $f_{d}(\cdot)$ should be trained to minimize the reconstruction loss $L_{\textrm{re}} = \frac{1}{D} \sum_{d=1}^D \left\| f_{\textrm{De}}(\boldsymbol{h}_d) -  \boldsymbol{S}_d \right\|^2$. However, the learned representation may be restricted to the reconstruction task if only the reconstruction loss is considered. To facilitate the latent representation $\boldsymbol{h}_d$ a good clustering performance, features distinct for clustering should be incorporated.
\subsubsection{K-means Clustering}
Given the latent representation matrix $\boldsymbol{H} = \left[ \boldsymbol{h}_1, \ldots, \boldsymbol{h}_D \right] \in \mathbb{R}^{M \times D}$, the standard K-means clustering objective can be equivalently reformulated, following \cite{NIPS2001_d5c18698}, as finding discrete cluster indicator matrix $\boldsymbol{F} \in \mathbb{R}^{D \times P}$ to minimize $L_{\textrm{cl}} = \operatorname{Tr}\left(\boldsymbol{H}^{T} \boldsymbol{H}\right)-\operatorname{Tr}\left(\boldsymbol{F}^{T} \boldsymbol{H}^{T} \boldsymbol{H} \boldsymbol{F}\right)$. The $d$-th row of $\boldsymbol{F}$ indicates the cluster membership of trajectory $\boldsymbol{S}_d$. Since optimizing a discrete indicator matrix $\boldsymbol{F}$ is NP-hard, a common relaxation is to drop the discrete constraint and instead require $\boldsymbol{F}$ to satisfy the orthogonality condition $\boldsymbol{F}^\top \boldsymbol{F} = \boldsymbol{I}$. The closed-form solution of $\boldsymbol{F}$ can be then obtained by composing the first $P$ singular vectors of $\boldsymbol{H}$ according to the Ky Fan theorem \cite{NEURIPS2019_1359aa93}. Finally, to obtain discrete cluster labels, K-means clustering is applied to the row vectors of the relaxed matrix $\boldsymbol{F}$, i.e., a more cluster-friendly representation of $\boldsymbol{H}$, for clustering the trajectories in $\mathcal{D}$.
\subsubsection{Joint Training}
Recall that our aim is to learn $\boldsymbol{H}$ that simultaneously captures essential features for reconstruction and preserves discriminative structures for effective clustering. Thus, the joint training considering both reconstruction loss $L_{\textrm{re}}$ and the clustering loss $L_{\textrm{cl}}$ to optimize $f_{\textrm{En}}(\cdot)$ and $f_{\textrm{De}}(\cdot)$ can be formulated as follows:
\begin{subequations}
\vspace{-8pt}
\label{opt2}
\begin{IEEEeqnarray}{s,rCl'rCl'rCl}
& \underset{\boldsymbol{F}, \boldsymbol{H}}{\text{min}} &\quad&  L_{\textrm{joint}} = L_{\textrm{re}} + \lambda_0 L_{\textrm{cl}}  \label{obj2}\\
\vspace{-2pt}
&\text{s.t.} && \boldsymbol{F}^T \boldsymbol{F}=\boldsymbol{I},  \label{c2-1} 
\end{IEEEeqnarray}
\end{subequations}
where $\lambda_0$ is the regularization parameter. Given fixed $\boldsymbol{F}$, the joint objective \eqref{opt2} can be optimized with respect to the parameters of $f_{\textrm{En}}(\cdot)$ and $f_{\textrm{De}}(\cdot)$ via gradient backpropagation. Therefore, we adopt an alternating optimization strategy, where $\boldsymbol{F}$ and the parameters of $f_{\textrm{En}}(\cdot)$ and $f_{\textrm{De}}(\cdot)$ are updated iteratively. The overall training procedure of the clustering is summarized in Algorithm \ref{algo1}.
\begin{algorithm}[t]
\footnotesize
\caption{Clustering of the one-class dataset}
\label{algo1}
\begin{algorithmic}[1] 
\Require One-class dataset $\mathcal{D}$, number of clusters $P$, iteration round $i_{\textrm{max}}$, and update round  $i_{\textrm{T}}$
\Ensure Encoder $f_{\textrm{En}}(\cdot)$, decoder $f_{\textrm{De}}(\cdot)$, $P$ cluster centers $\boldsymbol{\delta}_1, \ldots, \boldsymbol{\delta}_P$, and $P$ pseudo-labeled datasets $\mathcal{D}_1, \ldots, \mathcal{D}_P$
\State Initialize $f_{\textrm{En}}(\cdot)$ and $f_{\textrm{De}}(\cdot)$ with random parameters.
\State Derive $\boldsymbol{F}$ based on the representation matrix $\boldsymbol{H}$ given by the initial encoder $f_{\textrm{En}}(\cdot)$.
\For{$i = 1$ to $i_{\textrm{max}}$}
    \State With derived $\boldsymbol{F}$, update the parameters of $f_{\textrm{En}}(\cdot)$ and $f_{\textrm{De}}(\cdot)$ according to the loss given in \eqref{obj2} \;
    \If{$i \ \% \ i_{\textrm{T}} = 0$}
        \State With derived $\boldsymbol{H}$, update $\boldsymbol{F}$ according to Ky Fan theorem.
    \EndIf
\EndFor
\State Apply K-means clustering to $\boldsymbol{F}$ for obtaining the cluster membership of trajectories in $\mathcal{D}$ and $P$ cluster centers $\boldsymbol{\delta}_1, \ldots, \boldsymbol{\delta}_P$.
\For{$p = 1$ to $P$}
\State Label all the ``in-class" trajectories $\boldsymbol{S}_d$ in $\mathcal{D}$ clustered as type $p$ with $c_d=1$, while the remaining ``out-of-class" trajectories with $c_d=0$.
\State Construct $\mathcal{D}_p$ with pseudo labels.
\EndFor
\end{algorithmic}
\end{algorithm}

\vspace{-8pt}
\subsection{Neuro-symbolic Learning of \ac{STL} Formulas}
After constructing $P$ one-vs-rest labeled datasets, we train an attack detection model for each dataset $\mathcal{D}_p$. The objective is to infer an \ac{STL} formula $\phi_p$ that characterizes the normal behavior for type $p$, and subsequently identifies spoofed trajectory.
To this end, we incorporate a novel neuro-symbolic framework named TLINet \cite{TLINet} composed of three differentiable layers. Each layer is designed to capture one unique property of the trajectory, which is detailed as follows: 
\begin{figure*}[t]
\vspace{-15pt}
\captionsetup[subfigure]{font=small}  
    \centering
    \subfloat[]{
        \includegraphics[width=0.27\textwidth]{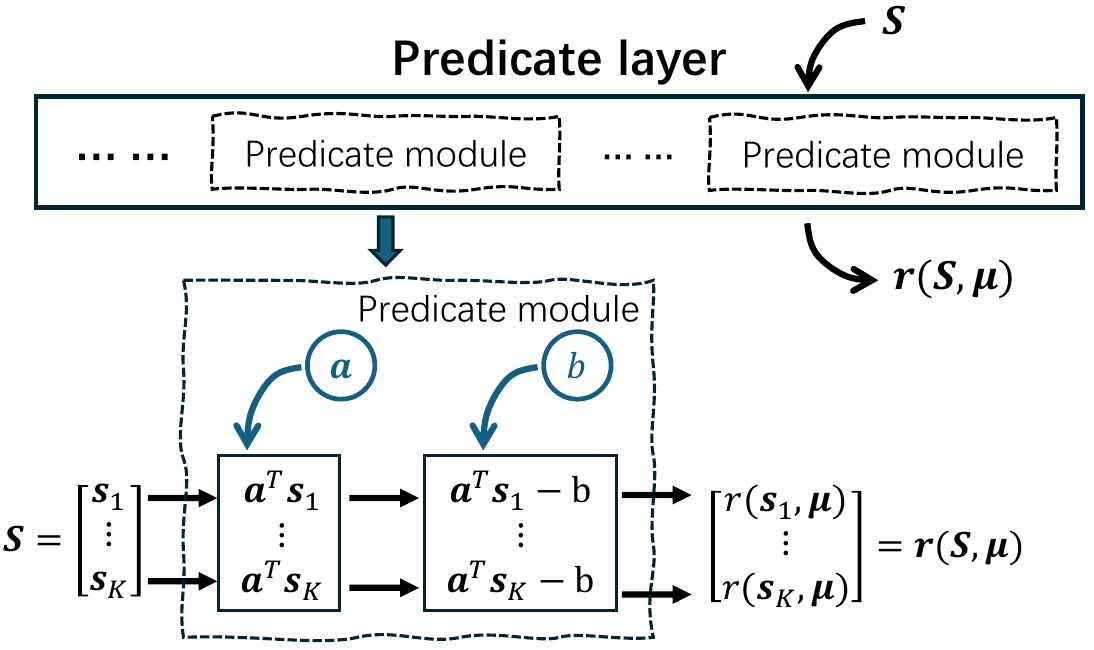}
        \label{predicate_layer}
    }
    \subfloat[]{
        \includegraphics[width=0.30\textwidth]{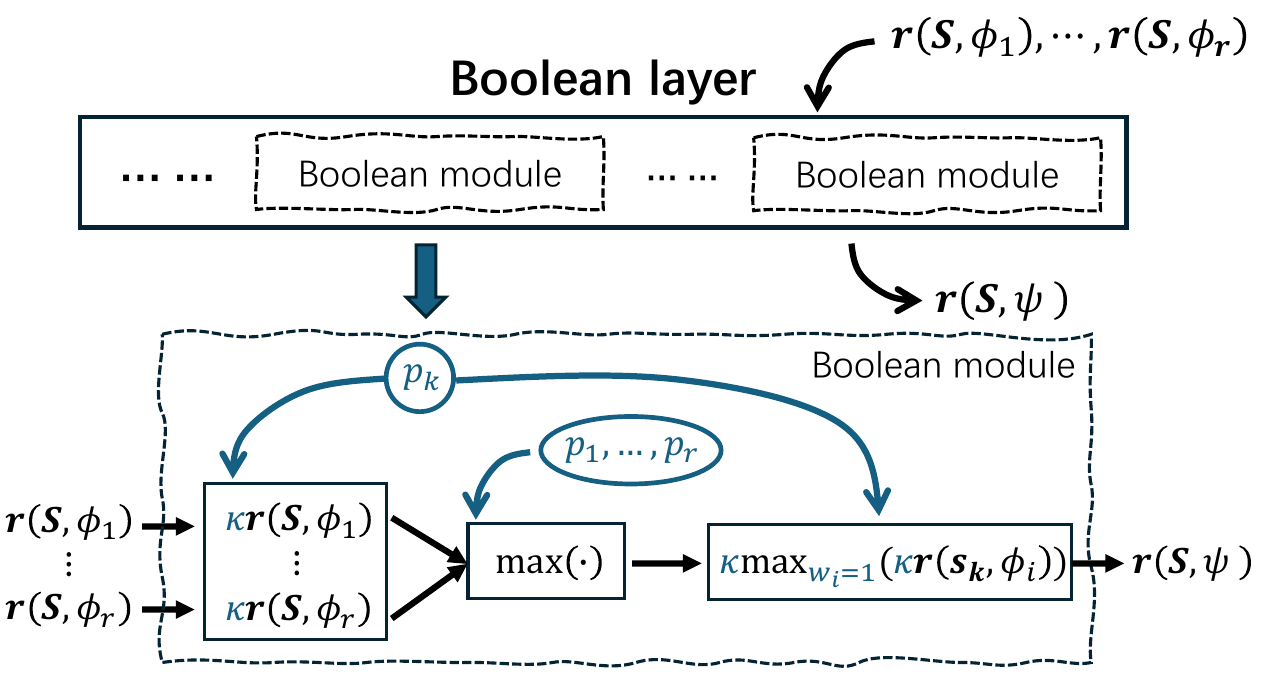}
        \label{boolean_layer}
    }
    \subfloat[]{
        \includegraphics[width=0.34\textwidth]{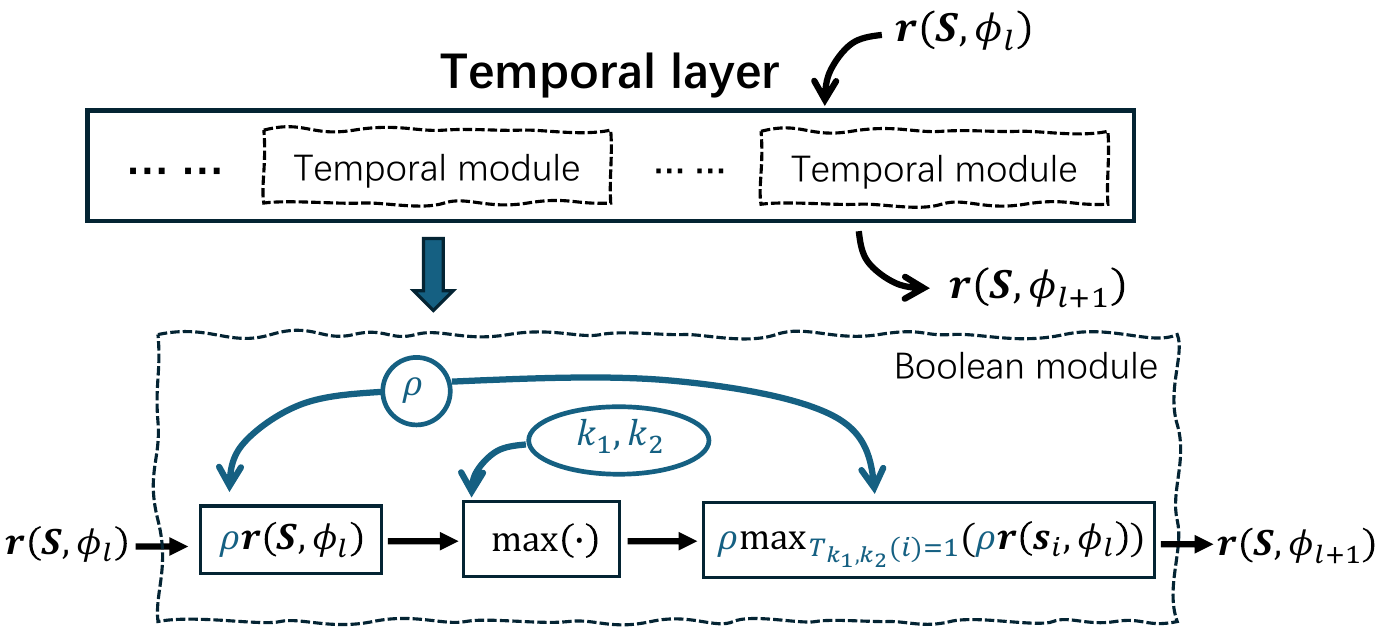}
        \label{temporal_layer}
    }
    \caption{Three basic differentiable layers in TLINet: a) Predicate layer, b) Boolean layer, c) Temporal layer. The trainable parameters are denoted in blue.}
    \label{fig:example}
  \vspace{-18pt}
\end{figure*}
\subsubsection{Predicate Layer}
A predict layer is composed of multiple independent predicate modules. Each predicate module takes trajectory $\boldsymbol{S}$ as input and outputs the robustness vector $\boldsymbol{r}(\boldsymbol{S}, \mu) = \left[r(\boldsymbol{s}_1, \mu), \ldots, r(\boldsymbol{s}_K, \mu) \right]$ over a predicate $\mu$. 
The predicate module captures general low-level properties of trajectories, such as position or speed constraints, that hold across the entire time horizon.
Similar to \cite{10156357} and \cite{TLINet}, we consider a simplified linear predicate $f_\mu(\boldsymbol{s}_k) = \boldsymbol{a}^{T} \boldsymbol{s}_k - b $, where $\boldsymbol{a} \in \mathbb{R}^d$ and $b \in \mathbb{R}$ are the parameters to be learned. 
The structure of predicate layer is shown in Fig. \ref{predicate_layer}.
\subsubsection{Boolean Layer}
A Boolean layer is also composed of multiple dependent Boolean modules. However, a single Boolean module will take the robustness vectors $\boldsymbol{r}(\boldsymbol{S}, \phi_{1}), \ldots, \boldsymbol{r}(\boldsymbol{S}, \phi_{r})$ of a series of sub-formulas $\phi_{1}, \ldots, \phi_{r}$ as input and outputs $\boldsymbol{r}(\boldsymbol{S}, \psi)$ with $\psi = \bigwedge_{i=1}^r \phi_{r} \mid \bigvee_{i=1}^r \phi_{r}$. 
Essentially, the Boolean layer can encode the conjunction and disjunction operation defined in \eqref{rc} and \eqref{rd}, and, thus, the model can now combine multiple low-level predicates. Such integration enables the system to identify spoofed trajectory over combinations of features, e.g., detecting abnormal patterns when multiple spatial constraints are jointly violated. The structure of Boolean layer is illustrated in Fig. \ref{boolean_layer}. To unify the discrete selection between logical operations $\bigwedge_{i=1}^r \phi_{r}$ and $\bigvee_{i=1}^r \phi_{r}$, a learnable selection operator $\kappa \sim \textrm{Ber}_{\textrm{ML}, \mathcal{X}_\kappa}(p_{\kappa})$ with $\mathcal{X} = \left\{ 1, -1 \right\}$ is incorporated. Specifically, the operation of $\bigwedge_{i=1}^r \phi_{r}$ and $\bigvee_{i=1}^r \phi_{r}$ on $\boldsymbol{s}_k$ can be unified as $\kappa \max_{i}(\kappa r(\boldsymbol{s}_k, \phi_{i}))$ according to \eqref{rc} and \eqref{rd}. Moreover, $\textrm{Ber}_{\textrm{ML}, \mathcal{X}}(p)$ with $\mathcal{X} = \left\{ X_0, X_1 \right\}$ is the \textit{maximum likelihood draw} distribution and $x \sim \textrm{Ber}_{\textrm{ML}, \mathcal{X}}(p)$ indicates that
\begin{equation}
\begin{aligned}
&P\left(x=X_0\right)=1, \quad \text { if } 0.5 \leq p \leq 1, \\
\vspace{-4pt}
&P\left(x=X_1\right)=1, \quad \text { if } 0 \leq p<0.5 .
\end{aligned}
\end{equation}
Thus, the discrete selection between different operations can be learned via the training update of parameter $p_{\kappa}$. When $0.5 \leq p_{\kappa} \leq 1$ is learned, then the operator is $\kappa = 1$, which represents the disjunction operation. Similarly, we define $\boldsymbol{w} = [ w_1, \ldots, w_r ]$ with $w_i \sim \textrm{Ber}_{\textrm{ML}, \mathcal{X}_w}(p_i), \forall i = 1, \ldots, r$ and $\mathcal{X}_w = \left\{ 0, 1 \right\}$ to guide the participation of sub-formulas $\phi_{1}, \ldots, \phi_{r}$ in Boolean operation. Thus, $\phi_r$ will only be included in the Boolean composition if $w_r = 1$, allowing the network to selectively compose sub-formulas via the training update of parameters $p_1, \ldots, p_r$. The $k$-th element of $\boldsymbol{r}(\boldsymbol{S}, \psi)$ is calculated as $r(\boldsymbol{s}_k, \psi) = \kappa \max_{w_i=1}(\kappa r(\boldsymbol{s}_k, \phi_{i}))$. Finally, the trainable parameters of a Boolean layer are $p_{\kappa}$ and $p_1, \ldots, p_r$. Moreover, to facilitate a differentiable max operation, the averaged max approximation \cite{TLINet} is incorporated.

\subsubsection{Temporal Layer}
The temporal layer is composed of multiple temporal module. Each temporal module takes $\boldsymbol{r}(\boldsymbol{S}, \phi_{l})$ of a lower-level formula $\phi_{l}$ as input and outputs $\boldsymbol{r}(\boldsymbol{S}, \phi_{l+1})$ of a higher-level formula $\phi_{l+1}$ by applying a temporal operator $\Diamond$ or $\Box$ over a specified time interval $[k_1, k_2]$ on $\phi_{l}$. In other words, temporal layer can encode nested temporal operation $\phi_{l+1} = \Diamond_{[k_1, k_2]}\phi_{l} \mid \Box_{[k_1, k_2]}\phi_{l}$. This layer allows the model to identify persistent or transient violations of expected behaviors across time. For instance, trajectories with repeated violations of a given predicate over multiple time steps are generally more indicative of spoofing. Thus, the temporal layer enables the system to capture long-range temporal consistency in the trajectory of a \ac{VU}. Similarly, a learnable selection operator $\rho \sim \textrm{Ber}_{\textrm{ML}, \mathcal{X}_\rho}(p_{\rho})$ with $\mathcal{X}_\rho = \left\{ 1, -1 \right\}$ is used in the temporal layer to determine different logical operations $\Diamond_{[k_1, k_2]}\phi_{l}$ and $\Box_{[k_1, k_2]}\phi_{l}$ (\eqref{re} and \eqref{rf}). To facilitate the training of the time interval $[k_1, k_2]$, a trainable time vector $T_{k_1, k_2} = [i_0, \ldots, i_{K-1}] \in \left\{0,1\right\}^K$ is incorporated:
\vspace{-5pt}
\begin{equation}
    \begin{aligned}
T_{k_1, k_2}
= & \frac{1}{\eta} \min \left(\operatorname{ReLU}\left(\mathbf{n}-\mathbf{1}\left(k_1-\eta\right)\right)-\operatorname{ReLU}\left(\mathbf{n}-\mathbf{1} k_1\right) \right. \\
& \left. \times \operatorname{ReLU}\left(-\mathbf{n}+\mathbf{1}\left(k_2+\eta\right)\right)-\operatorname{ReLU}\left(-\mathbf{n}+\mathbf{1} k_2\right)\right),
\end{aligned}
\end{equation}
where $\boldsymbol{n} = [0, 1, \ldots, k-1]$, $\boldsymbol{1}$ is a all one vector, and $\eta$ is a positive hyperparameter controlling the slope steepness of $T_{k_1, k_2}$. $T_{k_1, k_2}$ acts as an indicator vector such that $T_{k_1, k_2}(k) = 1, k_1 \leq k \leq k_2$ and $T_{k_1, k_2}(k) = 0, k < k_1 \vee k > k_2$. Consequently, the temporal operators $\Diamond_{[k_1, k_2]}\phi_{l}$ and $\Box_{[k_1, k_2]}\phi_{l}$ operate exclusively over time steps with indicator vector $T_{k_1, k_2}$ equal to 1, restricting their scope to the interval $ k_1 \leq k \leq k_2$. Element $k$ of $\boldsymbol{r}(\boldsymbol{S}, \phi_{l+1})$ is calculated as $r(\boldsymbol{s}_k, \phi_{l+1}) = \rho \max_{T_{k_1, k_2}(i)=1}(\rho r(\boldsymbol{s}_i, \phi_{l}))$. The trainable parameters of the Boolean module are $p_{\rho}$ and $k_1, k_2$, as shown in Fig. \ref{temporal_layer}. Similarly, the differentiable max operation is realized by the sparse softmax proposed in \cite{TLINet}.

\begin{figure}[t]
	\centering
	\includegraphics[scale=0.34]{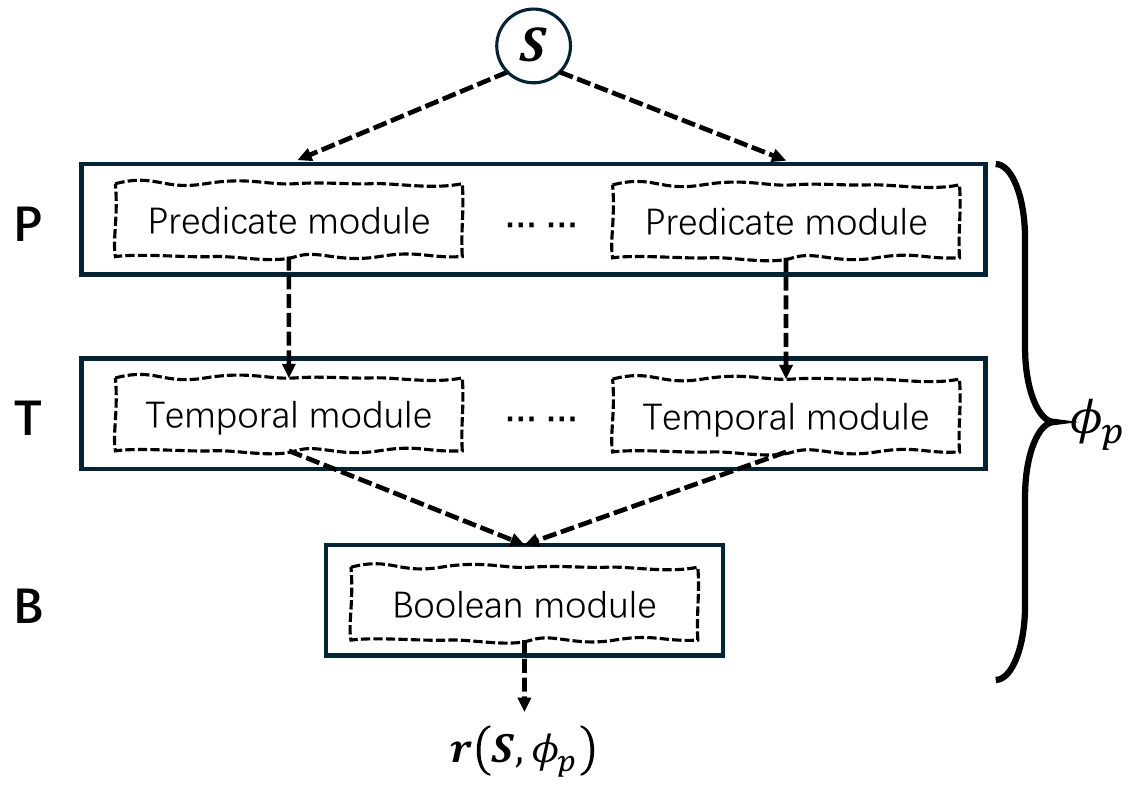}
       \vspace{-6pt}
	\caption{\small{An illustrative example of TLINet composed of one predicate layer, one temporal layer, and one Boolean layers.}}
       \vspace{-12pt}
 \label{layer_structure}
\end{figure}
By flexibly stacking temporal or Boolean layers on top of the predicate layer, we can construct TLINet to learn the \ac{STL} formula $\phi_p$ for each dataset $\mathcal{D}_p$. A representative illustration is given in Fig.~\ref{layer_structure}. Here, the predicate layer, serving as the foundation, learns basic predicate of the trajectory's features. Then, a temporal layer is stacked to formulate the temporal relationship of the trajectory. Finally, a Boolean layer is used to decide the conjunction or disjunction logic and output the robustness value of the given trajectory. For the training process of the \ac{STL} formula, a composite loss function incorporating task-specific objectives and regularization terms is adopted. Take $\phi_p$ for instance, the total loss is given by
\vspace{-6pt}
\begin{equation}
    L_{\textrm{STL}}=\mathcal{L}_{\text {task }}+\lambda_1 \mathcal{L}_s+\lambda_2 \mathcal{L}_{\text {avm }}+\lambda_3 \mathcal{L}_{\text {kavm }},
\end{equation}
where $\mathcal{L}_{\text {task }} = \frac{1}{D} \sum_{d=1}^N e^{-c_d r(\boldsymbol{s}_{d,0}, \phi_p)}$ is the main task loss for classifying ``in-class" and ``out-of-class" samples, $\lambda_1, \lambda_2, \lambda_3 \in \mathbb{R}_{+}$ are balancing coefficients, and $\mathcal{L}_s$, $\mathcal{L}_{\text {avm }}$, $\mathcal{L}_{\text {kavm }}$ are auxiliary regularization losses. In particular, $\mathcal{L}_s$ encourages the sparsity in the learned structure of $\phi_p$, $\mathcal{L}_{\text {avm }}$ serves as the regularizer to promote distinct sub-formula selection in the Boolean layer, and $\mathcal{L}_{\text {kavm }}$ stabilizes choice between different logical operations when applying differentiable logical aggregation. The detailed formulation of each regularization term can be found in \cite{TLINet}. Finally, the \ac{STL}-based attack detection framework is given in Algorithm \ref{algo2}. In summary, the proposed \ac{STL}-based detection framework yields interpretable formulas that enable explicit inference of spoofing attack, while also providing structural transparency of the detection method in safety-critical \ac{ISAC} systems.
\begin{algorithm}[t]
\footnotesize
\caption{\Ac{STL}-based attack detection}
\label{algo2}
\begin{algorithmic}[1] 
\Require Sensing outcome $\boldsymbol{\hat{S}}$ (trajectory), well-trained encoder $f_{\textrm{En}}(\cdot)$, $P$ \ac{STL} formulas $\phi_1, \ldots, \phi_P$, and $P$ cluster centers $\boldsymbol{\delta}_1, \ldots, \boldsymbol{\delta}_P$.
\Ensure Spoofing detection result for $\hat{\boldsymbol{S}}$.
\State Obtain the latent representation of the sensing outcome $\hat{\boldsymbol{h}} = f_{\textrm{En}}(\boldsymbol{\hat{S}})$.
\State Derive the distance between $\hat{\boldsymbol{h}}$ and cluster center $d_p = \left\| \hat{\boldsymbol{h}} - \boldsymbol{\delta}_p \right\|_2$ and cluster $\boldsymbol{\hat{S}}$ into type $p^* = \arg \min_p d_p $.
\State Identify spoofed trajectory if $r(\boldsymbol{\hat{s}}_0, \phi_{p^*})<0$.
\end{algorithmic}
\end{algorithm}

\vspace{-8pt}
\section{Simulation Results and Analysis}
\label{Section VI}
For our simulations, we consider a two-dimensional Cartesian coordinate system with the \ac{RSU} located at its origin. The \ac{RIS} is fixed at $(x_{\textrm{R}}, y_{\textrm{R}}) = $ ($5$~m, $15$~m), i.e., $\theta_{\textrm{R}} = 71.6^{\circ}$. The parameters of the \ac{ISAC} vehicular network are set as follows unless specified otherwise later: $P = 30$~dBm, $\sigma^2 = -100$~dBm, $f_c = 28$~GHz, $N_t = N_r = 32$, $M = 32$, $\kappa_{\textrm{V}} = 7$~dBsm, $\eta = 0.8$, $S = 50~\text{cm} \times 10~\text{cm}$, $T = 10$~ms, and $\Delta T = 1$~ms \cite{10740590,9947033,9171304}
. The parameters for the consistency constraints are $a_{\textrm{max}} = 3~\text{m/s}^2$, $a_{\textrm{max}} = -3~\text{m/s}^2$, $\delta_x = 1$~m, and $\delta_y = 0.3$~m. The hyperparameters for the clustering Algorithm \ref{algo1} are $i_{\textrm{max}} = 50$ and $i_{\textrm{T}} = 5$.

\begin{figure}[t]
	\centering	\includegraphics[scale=0.44]{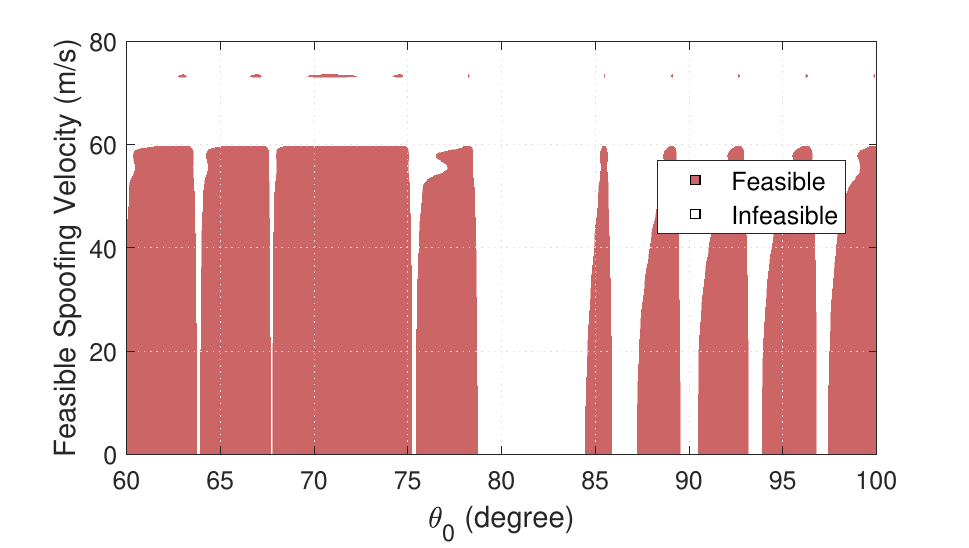}
    \vspace{-8pt}
	\caption{\small{Feasible spoofing velocity versus beam steering direction.}}
    \vspace{-15pt}
    \label{feasible_spoofing_velocity}
\end{figure}
We first examine the impact of the malicious \ac{RIS} on the sensing outcomes in a single time slot. The coordinates of the \ac{VU} in the considered slot are assumed as $(x_{\textrm{V}}, y_{\textrm{V}}) = $ ($3$~m, $21$~m), i.e., $\theta_{\textrm{V}} = 81.9^{\circ}$. Moreover, we assume $v = 10$~m/s, and, thus, the Doppler shift can be derived as $\mu_{\textrm{V}} = v f_c \cos{\theta_{\textrm{V}}}$. Fig. \ref{feasible_spoofing_velocity} shows the impact of \ac{RIS} spoofing on the estimated velocity of the \ac{VU}, which can be derived by the spoofed Doppler shift. In particular, the different feasible spoofing velocity sets, derived from \eqref{feasible set}, versus the \ac{RSU}'s beam steering direction $\theta_0$ are illustrated. From Fig. \ref{feasible_spoofing_velocity}, we can observe that, generally, a beam steered in the proximity of the \ac{VU}'s direction ($\theta_{\textrm{V}} = 81.9^{\circ}$) can eliminate the impact of \ac{RIS} spoofing. For instance, when $\theta_0 \in (78^{\circ}, 84^{\circ})$, we have $\mathcal{A} = \varnothing$, indicating that no successful spoofing can be conducted. However, when $\theta_0$ is slightly misaligned with the \ac{VU}, e.g., $\theta_0 = 77^{\circ}$ (a deviation of $-4.9^{\circ}$) or $\theta_0 = 85^{\circ}$ (a deviation of $3.1^{\circ}$), the \ac{RSU} may obtain a spoofed velocity ranging from $0.1$~m/s to $60$~m/s, introducing a significant estimation error of $-9.9$~m/s to $50$~m/s with respect to the true velocity $v = 10$~m/s.

\begin{figure}[t]
	\centering	\includegraphics[scale=0.51]{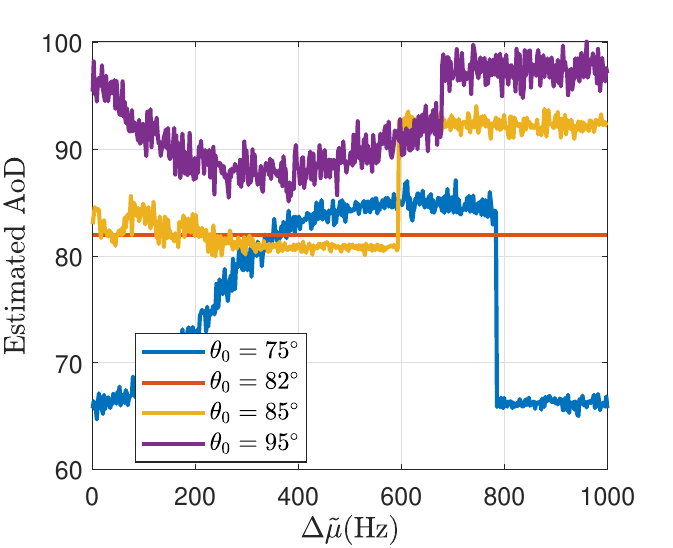}
    \vspace{-8pt}
	\caption{\small{Impact of Doppler shift spoofing on the resulted \ac{MLE} of \ac{AoD}.}}
    \vspace{-12pt}
    \label{spoofed_AoD}
\end{figure}
In Fig. \ref{spoofed_AoD}, we show the spoofed \ac{MLE} for the \ac{AoD} as a function of the spoofing frequency on the Doppler shift estimation. The estimated values for the \ac{AoD} are averaged over \num{200} trials of independent noise. First, we can observe that the \ac{MLE} for the \ac{AoD} will not be affected under a beam perfectly steered to the \ac{VU} when $\theta_0 = 82^{\circ}$. However, the \ac{RIS} can always find a spoofing frequency that causes a severe deviation of the spoofed \ac{MLE} under beam misalignment. For instance, when $\theta_0 = 85^{\circ}$ deviates by only $3.1^{\circ}$ from $\theta_{\textrm{V}}$, its \ac{MLE} can be spoofed as $92^{\circ}$ with $\Delta \tilde{\mu} \geq 600$~Hz, leading to an estimation error of at least $11^{\circ}$. When $\theta_0 = 75^{\circ}$ or $\theta_0 = 95^{\circ}$ is set, the accuracy of \ac{AoD} estimation is even more  significantly compromised.


\begin{figure}[t]
	\centering	\includegraphics[scale=0.36]{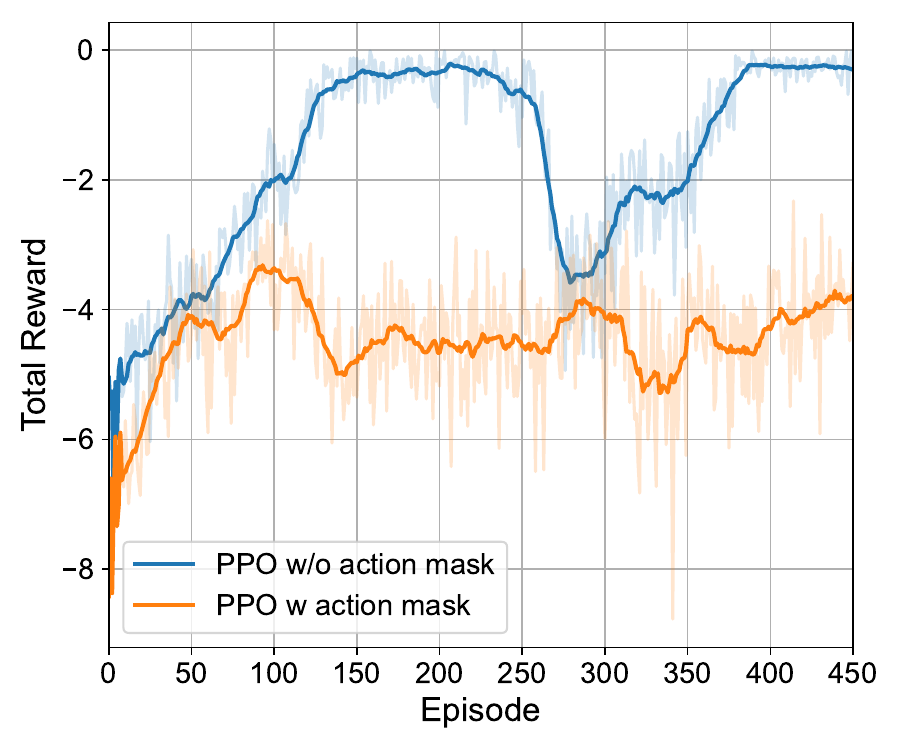}
    \vspace{-6pt}
	\caption{\small{Reward versus \ac{PPO} training episodes.}}
    \vspace{-15pt}
    \label{Reward PPO}
\end{figure}
In Fig. \ref{Reward PPO}, the average reward of the action mask enhanced \ac{PPO} framework is evaluated and compared to the standard \ac{PPO} framework (without action mask). From Fig.~\ref{Reward PPO}, we can observe that the \ac{PPO} framework with action mask, though more stable, counterintuitively achieves lower reward after 400 episodes of training, which seemingly indicates that the action mask mechanism hinders the \ac{RIS} from generating plausible trajectories. However, this observation is misleading and action mask mechanism is actually necessary. Specifically, the \ac{PPO} framework without action mask tends to chose a spoofing frequency whereby $\Delta \tilde{\mu}_k \notin \mathcal{A}_k$, which naturally satisfies the consistency with higher reward and fails to conduct sensing attack (see Theorem \ref{theorem 1}). Only with the guarantee provided by action mask mechanism, the invalid spoofing frequencies can be discarded in the \ac{PPO} framework. As a result, the sensing outcomes of the \ac{VU} will largely deviate from the ground truth, as shown next in an \ac{SAC} use case.

\begin{figure}[t]
\vspace{-6pt}
\scriptsize
\captionsetup[subfigure]{font=small}  
    \centering
    \subfloat[]{
        \includegraphics[width=0.24\textwidth]{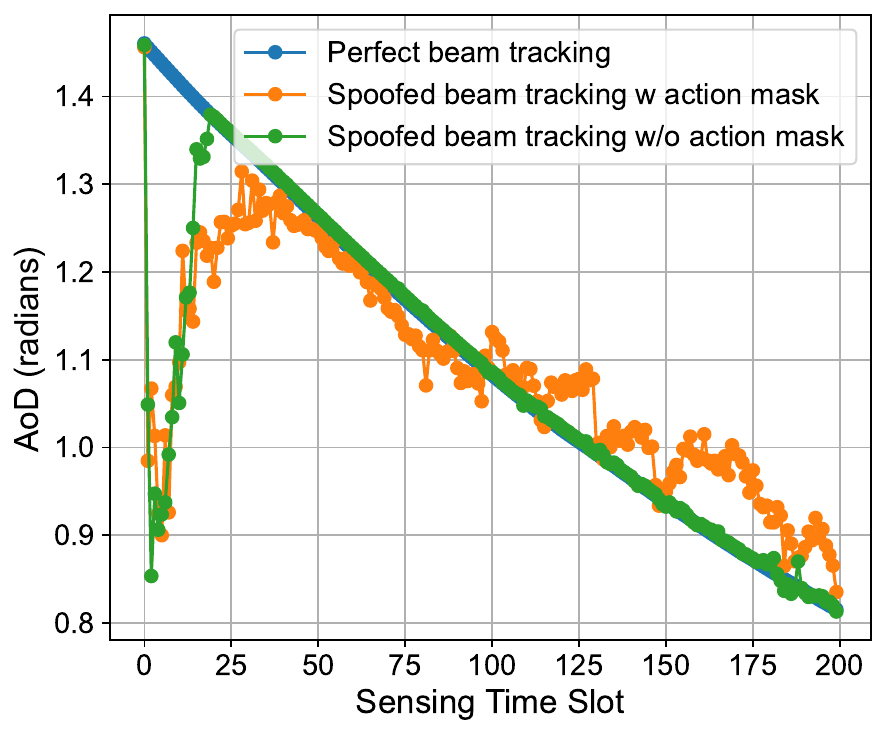}
        \label{AoD traj}
    }
    \subfloat[]{
        \includegraphics[width=0.24\textwidth]{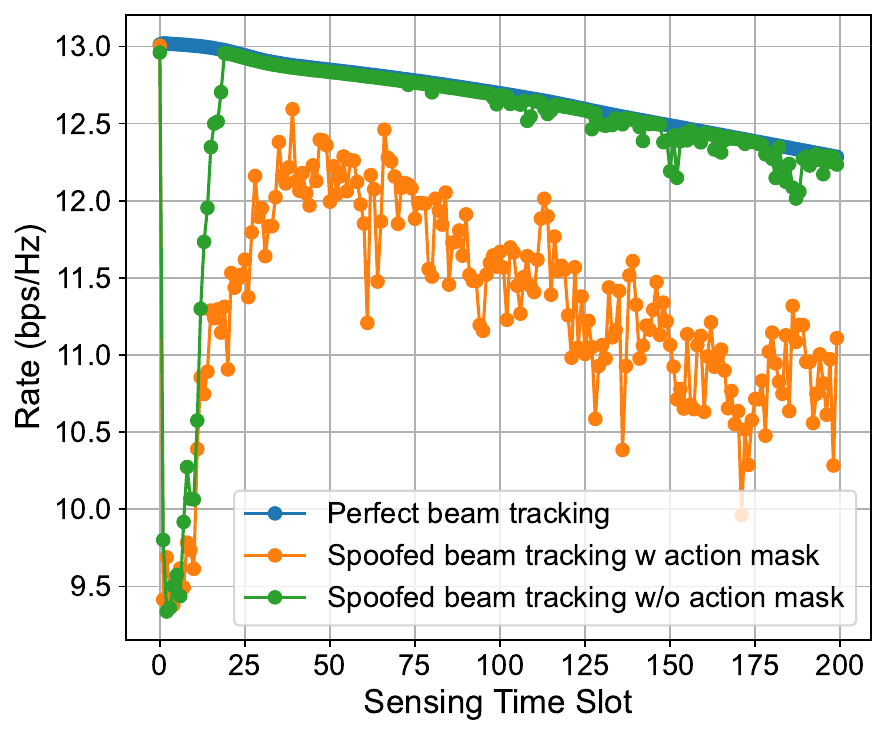}
        \label{Rate traj}
    }
    \vspace{-6pt}
    \caption{Impact of the \ac{RIS} spoofing in the beam tracking application: a) Estimated \ac{AoD}, b) Communication rate. }
    \label{Beam tracking}
        \vspace{-8pt}
\end{figure}
To illustrate the impact of the \ac{RIS} sensing spoofing, we consider the beam tracking application studied in \cite{9171304}. Specifically, the estimated \ac{AoD} and achievable rate of the \ac{VU} are evaluated in Fig. \ref{AoD traj} and Fig. \ref{Rate traj}. Two beam-tracking strategies are considered. Perfect beam tracking represents the scenario in which the \ac{RSU} can obtain an accurate sensing outcomes of the \ac{VU} without the sensing attack, while the spoofed beam tracking represents the scenario under \ac{RIS} spoofing attack. Two \ac{PPO} strategies trained with and without action mask are used to determine the spoofing frequency. From Fig. \ref{AoD traj}, we can observe that both strategies induce \ac{AoD} estimation error up to $0.6$ radian during the first $25$ slots. This is because the \ac{VU} is close to the \ac{RIS}, which leads to a larger feasible frequency set and a less accurate \ac{AoD} estimation. As the \ac{VU} moves away, the strategy trained without action mask fails to spoof the \ac{AoD} estimation while that trained with action mask is still able to result \ac{AoD} estimation error ranging from $-0.1$ to $0.13$ radian. The impact of this \ac{AoD} estimation error in beam tracking is shown in \ref{Rate traj}. For both two strategies, the \ac{VU} experiences an achievable rate decrease of up to $28~\%$ compared to that under the perfect beam tracking in the first $25$ time slots. Moreover, even after the first $25$ time slots, the strategy with action mask can still result an achievable rate decrease from $0.3$~bps/Hz to $2.5$~bps/Hz.

\begin{figure}[t]
	\centering	\includegraphics[scale=0.38]{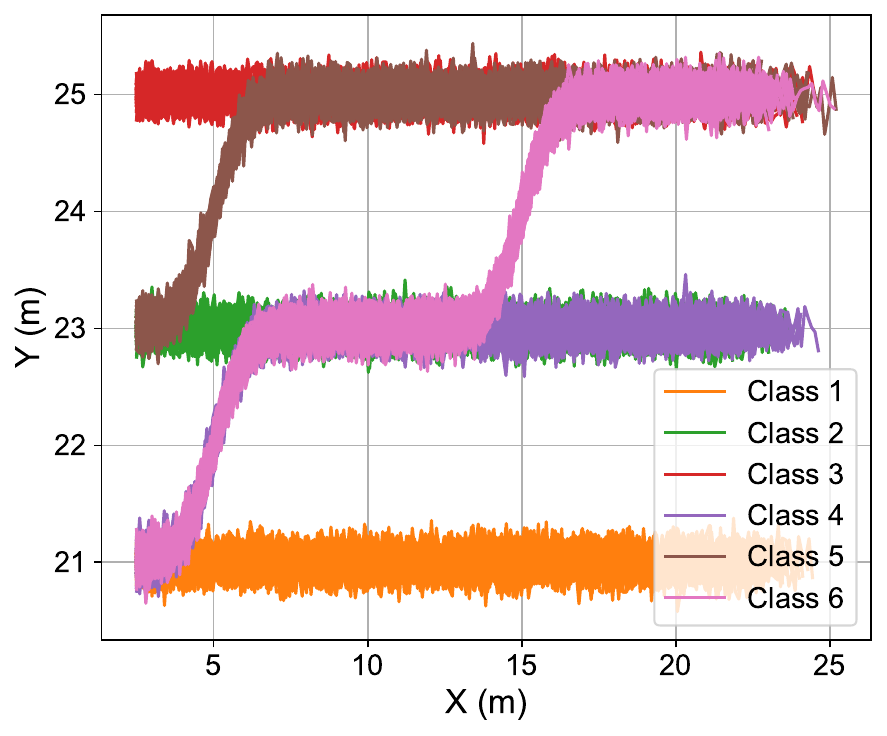}
    \vspace{-6pt}
	\caption{\small{Clustering results of the trajectory dataset.}}
    \vspace{-15pt}
    \label{dataset}
\end{figure}
Next, we evaluate the effectiveness of the proposed \ac{STL}-based neuro-symbolic attack detection framework. First, to train the \ac{DTCR} clustering framework, we consider a one-class dataset composed of $480$ trajectory samples with trajectory length of $67$. By exploiting the \ac{DTCR} clustering framework given in Algorithm 2, the trajectory samples are clustered into six classes, as shown in Fig. \ref{dataset}. Specifically, classes 1, 2, and 3 represent the straight driving patten, classes 4 and 5 represent the single lane changing pattern, and class 6 represents the double lane changing pattern. At the same time, six pseudo-labeled datasets are obtained and each of them will be used to learn a \ac{STL} formula via Algorithm 2. We adopt a TLINet of the same structure as given in Fig. \ref{layer_structure}. The learned formulas for six classes are given in Table \ref{formula} with the training hyperparameters listed in Table \ref{hyper}. To evaluate the clustering framework and the learned formulas, $120$ normal trajectory samples and $120$ spoofed trajectory samples (generated by the trained \ac{PPO} framework with action mask) are used for testing. Moreover, a deep clustering-based attack detection method is incorporated as benchmark. Specifically, the considered benchmark extracts the latent representation of a test trajectory and classifies it as spoofed if the representation lies far from the cluster center of regular trajectories in the one-class dataset. The confusion matrices demonstrating the classification performance of the benchmark model and the proposed detection model are given in Fig. \ref{confusion_matrix_DEEP} and Fig. \ref{confusion_matrix_STL}, respectively, where actual positive and actual negative indicate normal and spoofed trajectory, while predicated positive and negative represent the detection result. From Fig. \ref{confusion_matrix_DEEP}, we can observe that the benchmark fails to identify the spoofed trajectories generated under consistency constraints (with an accuracy of only $20.83 \%$), indicating that their latent representation are not separable from those of regular trajectories. In contrast, our proposed \ac{STL}-based method achieves a significantly higher detection accuracy of $74.17\%$, which is $53.34\%$ higher than that of the deep clustering-based benchmark. Moreover, the learned \ac{STL} formulas, given in Table \ref{formula}, not only provide interpretability but also offer clear logical rules for distinguishing spoofed trajectories.
\begin{table}[t]
\scriptsize
\renewcommand{\arraystretch}{1.5}  
\centering
\caption{Learned \ac{STL} formulas}
\vspace{-5pt}
\label{formula}
    \begin{tabular}{c|c}
    \hline
    \textbf{Class index}    & \textbf{\ac{STL} formula}        \\ \hline
    $1$ & $\square_{[30,31]}0.4111x-0.3976y+3.8745>0.00$ \\ \hline
    $2$ & \makecell[l]{
            $\square_{[35,36]}(0.8856x - 0.6263y + 3.5292 > 0.00)$ \\ 
            $\wedge \Diamond_{[0,1]}(-0.4278x + 0.1899y - 3.2133 > 0.00)$
            }  \\ \hline
    $3$ & $\square_{[3,29]}0.2661x+0.2212y-6.1891>0.00$  \\ \hline
    $4$ & \makecell[l]{
            $\square_{[2,29]}(0.8741x-0.4169y+5.7552>0.00)$ \\
            $\wedge \square_{[32,55]}(0.4738x - 0.1150y - 1.3614>0.00)$
            }  \\ \hline
    $5$ & $\square_{[14,26]}-0.0045x + 0.3247y-7.9649>0.00$ \\ \hline
    $6$ & \makecell[l]{
            $\square_{[1,15]}(0.2649x - 0.3085y + 5.9170>0.00)$ \\
            $\wedge \square_{[5,53]}(-0.7463x + 1.0237y - 10.4846>0.00)$
            }  \\ \hline
    \end{tabular}
    \vspace{-0.5cm}
\end{table}
\begin{table}[t]
\scriptsize
\centering
\caption{Hyperparamters for TLINet}
\vspace{-5pt}
\label{hyper}
    \begin{tabular}{c|ccccc}
    \hline
    \textbf{Class index}    & $\lambda_1$  & $\lambda_2$  & $\lambda_3$ & $\lambda_0$   \\ \hline
    $1$ & $4e-2$ & $5e-0$ & $5e-0$ & $5e1$ \\
    $2$ & $1e-3$ & $5e-0$ & $5e-0$ & $5e1$ \\
    $3$ & $1e-1$ & $1e1$ & $1e1$ & $5e1$ \\
    $4$ & $1e-3$ & $5e-0$ & $5e-0$ & $5e1$ \\
    $5$ & $2e-2$ & $5e-0$ & $5e-0$ & $5e1$ \\
    $6$ & $2e-2$ & $5e-0$ & $5e-0$ & $5e1$ \\ \hline
    \end{tabular}
    \vspace{-0.5cm}
\end{table}
\begin{figure}[t]
\vspace{-6pt}
\scriptsize
\captionsetup[subfigure]{font=small}  
    \centering
    \subfloat[]{
        \includegraphics[width=0.24\textwidth]{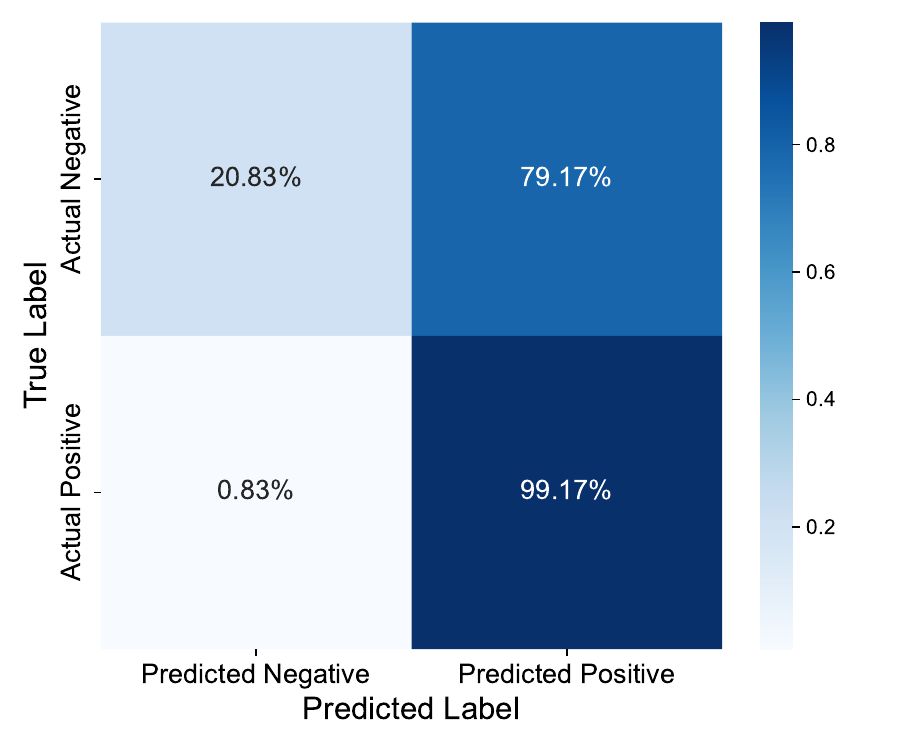}
        \label{confusion_matrix_DEEP}
    }
    \subfloat[]{
        \includegraphics[width=0.24\textwidth]{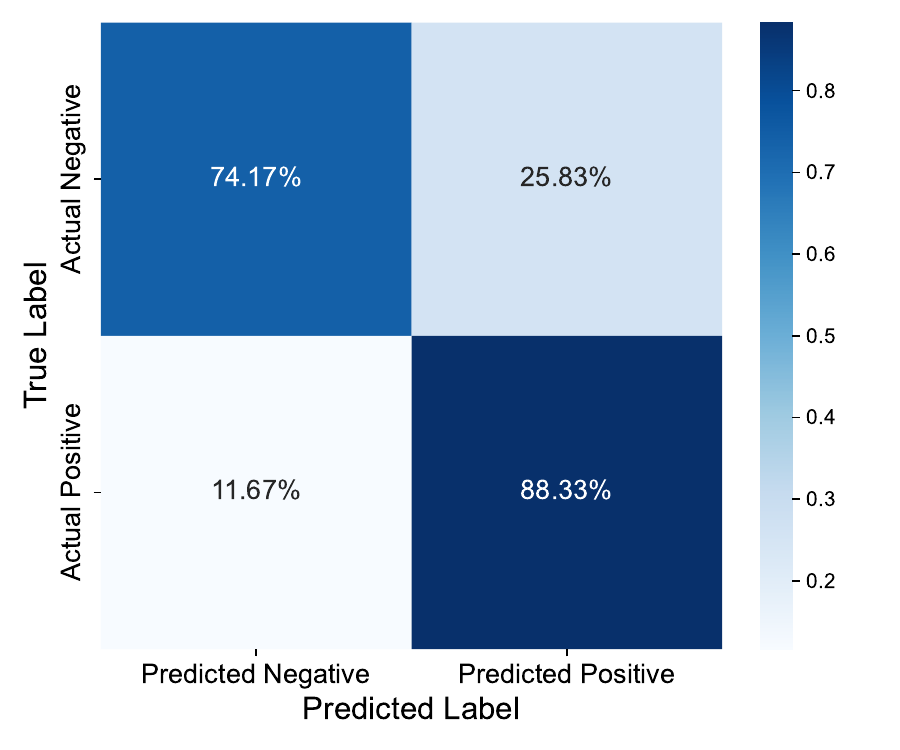}
        \label{confusion_matrix_STL}
    }
    \vspace{-6pt}
    \caption{Confusion matrix of two attack detection frameworks: a) Deep feature-based attack detection, b) Proposed \ac{STL}-based attack detection.}
    \label{confusion_matrix}
        \vspace{-12pt}
\end{figure}

\vspace{-8pt}
\section{Conclusion}
\label{Section VII}
In this paper, we have studied the feasibility of an \ac{RIS} spoofing attack in vehicular \ac{ISAC} network. Specifically, we have analyzed the impact of this spoofing on the Doppler shift and \ac{AoD} estimation of a \ac{VU}, from both slot-level and trajectory-level perspectives. The necessary conditions for the \ac{RIS} to conduct such sensing spoofing are derived. To address this attack, we have proposed an \ac{STL}-based neuro-symbolic attack detection framework, which learns the interoperable formulas for identifying spoofed trajectories.  Simulation results showed that the \ac{RIS} spoofing significantly compromises the sensing accuracy of the \ac{RSU}. Take the beam tracking application as an example, we have further shown that such sensing estimation error further reduces the \ac{VU}'s achievable rate. Finally, the simulation results show that the proposed \ac{STL}-based detection framework improves the detection accuracy by $53.34\%$ over the deep clustering-based benchmark, achieving an accuracy of $74.17\%$ in identifying the \ac{RIS} spoofing.
\vspace{-5pt}
\bibliographystyle{IEEEtran}
\bibliography{bibliography}
\end{document}